\title{Dissipativity of system abstractions obtained using approximate input-output simulation}
\author{Etika~Agarwal%
	\thanks{Department of Electrical Engineering, University of Notre Dame, Notre Dame, IN (\email{eagarwal@nd.edu}, \email{antsaklis.1@nd.edu}, \email{vgupta2@nd.edu})}%
	\and
	Shravan~Sajja%
	\thanks{IBM Research, Bangalore, India
			(\email{ssuryashravankumar@gmail.com})}
	\and
	 Panos J. Antsaklis%
	 \footnotemark[1]
	\and
	Vijay Gupta
	\footnotemark[1]
}
\newcommand{\highlighttext}[1] {\textcolor{black}{#1}}
\begin{document}

\maketitle

\begin{abstract}
This work focuses on the invariance of important properties between continuous and discrete models of systems which can be useful in the control design of large-scale systems and their software implementations. In particular, this paper discusses the relationships between the QSR dissipativity of a continuous state dynamical system and of its abstractions obtained through approximate input-output simulation relations. First, conditions to guarantee the dissipativity of the continuous system from its abstractions are provided. The reverse problem of determining the Q, S and R dissipativity matrices of the abstract system from that of the continuous system is also considered. Results characterizing the change in the dissipativity matrices are provided when the system abstraction is obtained. Since, under certain conditions, QSR dissipative systems are known to be stable, the results of this paper can be used to construct stable system abstractions as well. In the second part of this paper, we analyze the dissipativity of the approximate feedback composition of a continuous dynamical system and a discrete controller. We present illustrative examples to demonstrate the results of this paper.
\end{abstract}

\begin{keywords}
  Dissipativity, passivity, abstraction, simulation
\end{keywords}
\section{Introduction} 

Discrete event and hybrid system models for continuous systems are quite common. For example, such models are useful for sampled and quantized systems, and also in software implementations of continuous systems. Furthermore, such discrete models can be very useful in the control design of large dynamical systems, especially when there are temporal logic performance specifications and verification of safety requirements, e.g., in robotic systems. 
This is the main motivation behind studying system properties of interest that are present in both continuous and discrete models of the same system.

These discrete models can be obtained using abstraction based approaches. See for example, \cite{girard2}-\cite{zamani2012}. An abstracted system model approximates a continuous state dynamical system by a system with a pre-order or equivalence relation between the two systems. 
 Control design of dynamical systems using abstraction based approaches can be carried out efficiently based on two main factors \cite{girard3}, first, the possibility of constructing symbolic or purely discrete abstractions of the original system and second, the possibility to infer the behavior of the given continuous system based upon its discrete abstraction. The idea of using discrete abstractions for control design is motivated by the fact that control of a continuous system with a discrete controller requires the use of a continuous to discrete and discrete to continuous conversion. This set-up can be viewed as an interconnection of a continuous system with a software system, as shown in \cref{fig:sys_soft}. 



Dissipativity based approaches provide attractive alternatives to control analysis and design of such systems. These dissipativity approaches have been used for systems with control performance described in terms of stability and optimality requirements \cite{etika2}\cite{arash1} for continuous systems. Dissipativity is an energy based input-output property of dynamical systems \cite{willems}. 
A special form of dissipativity is QSR dissipativity which allows transparent relationships with several important concepts such as passivity and \(\mathcal{L}_2\) stability \cite{xia1}-\cite{moylan}. Further, QSR dissipativity is preserved over feedback and parallel interconnections; and series interconnections under certain conditions \cite{bao}. Hence, an added advantage of using control analysis and design techniques based on QSR dissipativity is that they scale well.

More recently, \cite{arcak}\cite{arcak2} used dissipativity like concepts for compositional analysis of interconnected systems with safety and temporal logic specifications. This opens up a new application area for dissipativity theory where controllers can be designed to meet temporal logic constraints (such as safety and reachability constraints) together with traditional specifications such as passivity, stability and optimality. As a first step towards this goal, in this paper, we analyze the dissipativity of system abstractions and find relationships between the QSR dissipativity, and hence passivity of systems and their approximately input-output similar  abstractions \cite{tazaki-hscc-2008}\cite{tazaki}.



\begin{figure}[h]
	\vspace{-2mm}
	\centering
	\includegraphics[scale=0.4]{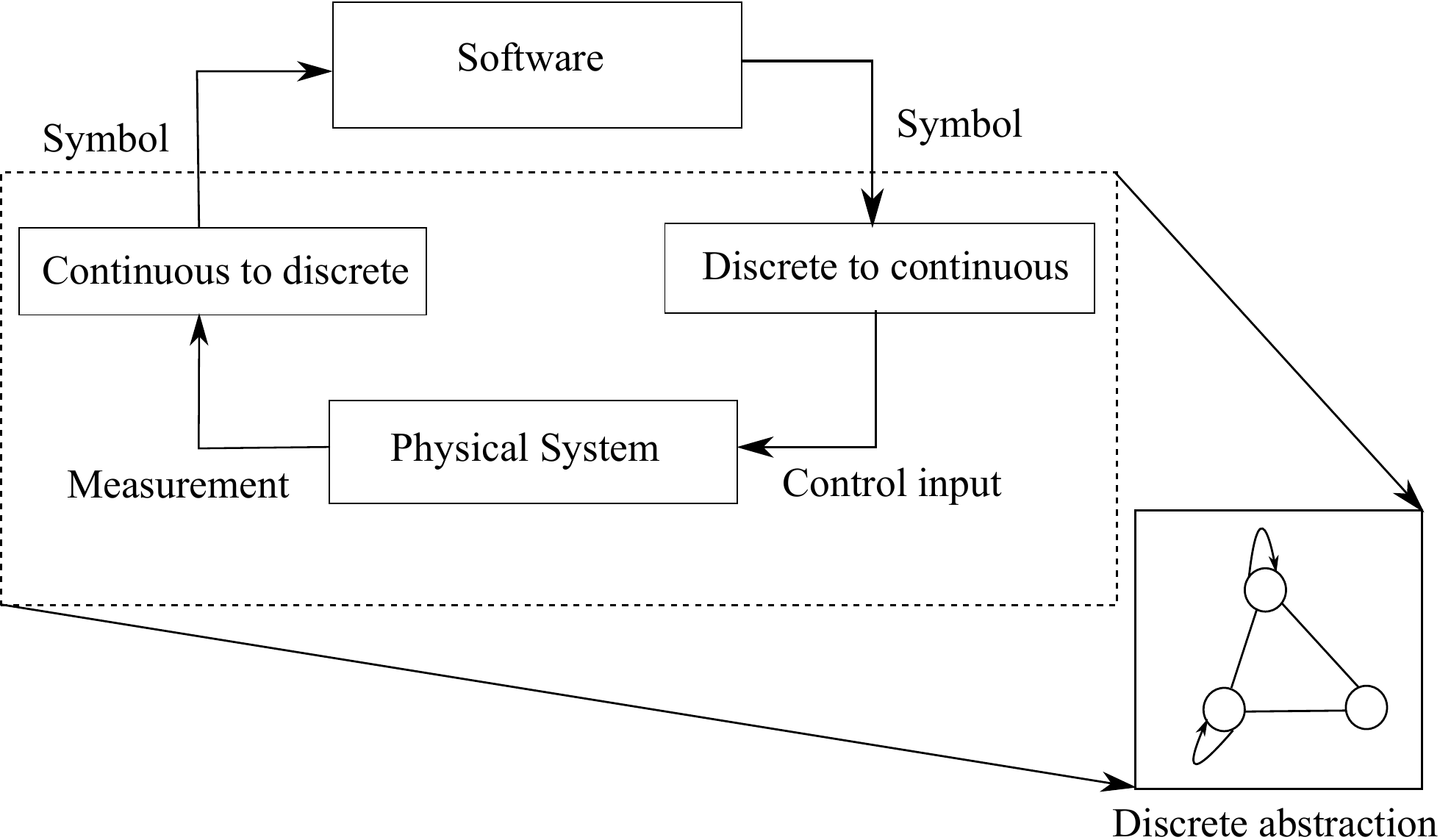}
	\caption{Finite state approximation of a continuous-time plant
		interacting with a finite state controller (software).}
	\label{fig:sys_soft}
\end{figure}

There are several approaches to system abstraction. Different techniques preserve different properties of the system such as reachability \cite{alur1} and compositionality \cite{girard1} while maintaining equivalence or pre-order in a certain sense between the two systems. One of the popular approaches for abstraction is using the notions of simulation, bisimulation and their approximate versions \cite{alur1}\cite{girard2} which have been used for both discrete and continuous time systems \cite{girard2}\cite{van der schaft}\cite{pappas}\cite{zamani2012}. \highlighttext{While these notions are composition preserving in some sense \cite{girard2}, approximate simulation and bisimulation are not preserved for more general input-output interconnections such as cascade and feedback. Tazaki et al \cite{tazaki-hscc-2008} proposed a modification of the definitions of approximate simulation and bisimulation which, under mild conditions, are interconnection preserving.} This means that the abstraction of interconnected system is the interconnection of system abstractions. However, \cite{cujipers} identified that stability is not preserved for simulation (or bisimulation) and additional continuity constraints need to be imposed on the simulation relation. Prabhakar et al \cite{prabhakar}\cite{prabhakar2} also proposed some continuity constraints on simulation relation to preserve variants of stability during abstraction. An alternate approach is offered in the present paper using dissipativity. In this work, we are interested in inferring QSR dissipativity properties of discrete abstractions from that of the corresponding continuous system. These discrete abstractions are obtained using a slight modification of method in \cite{zamani2012} and are related to the continuous system through approximate input-output simulation relation. \highlighttext{The definition of approximate input-output simulation relation used in this paper is a special case of the interconnection-compatible approximate simulation relation in \cite{tazaki-hscc-2008}}. QSR dissipative systems, under mild additional conditions, are stable. This allows us to obtain stable and composable system abstractions without imposing any strict constraints on the simulation relation. Related work on the passivity of bisimilar systems was done in \cite{xu_passivity_paper}, where the problem of discretization of a continuous controller while preserving the passivity of closed loop system was considered. Note that while \cite{xu_passivity_paper} discusses the passivity indices of approximate bisimulation of the controller, in the present paper, we provide the relationship between the more general QSR dissipativity of systems associated through approximate input-output simulation relations. We additionally discuss the QSR dissipativity of an approximate feedback composition \cite{tabuada2009} of systems. Moreover, as opposed to \cite{xu_passivity_paper}, we do not restrict the systems to be incrementally stable. Some parts of our work are contained in \cite{shravan}, but have not been published previously.

The main contribution of this paper is to provide a relationship between the QSR dissipativity of a system and its approximately input-output similar abstraction. Briefly, the results of this paper enable us to determine the Q, S and R dissipativity matrices of an abstract system by analyzing the dissipativity of the original continuous system and vice-versa. As a special case, these results also relate the passivity levels and stability of two systems associated through an approximate input-output simulation relation. We limit ourselves to the class of systems and abstract models which allow for a definition of inner product between inputs and outputs of system. An example of this is the abstraction used in \cite{zamani2012}. This allows us to apply the standard dissipativity definitions to abstractions of systems as well. 

 The concept of approximate input-output simulation relation is defined for both the continuous and discrete system abstractions. It is possible to obtain a continuous as well as discrete abstraction of the same dynamical system. The first result of this paper, provides conditions under which the QSR dissipativity of an approximately input-output similar continuous (continuous state) abstraction, implies the dissipativity of the original continuous dynamical system. We further consider the reverse problem of characterizing the dissipativity of a discrete state abstraction when the given continuous system is QSR dissipative. Two different cases are presented, (i) when the system state is the measured output, and (ii) when system state is not measured, instead measured output is same as the system output. The main difference in these two cases arises from the fact that the approximate input-output simulation relation between a system and its abstraction holds with respect to measured output. In both of these cases, the class of systems are incrementally forward complete and the abstraction is obtained using the approach in \cite{zamani2012}. Since both the approximate input-output simulation and QSR dissipativity are composition preserving, composition of QSR dissipativities of abstractions is the same as the QSR dissipativity of the abstraction of composition. 

In the second part of this paper we analyze the QSR dissipativity of composition of continuous state systems with discrete state systems, say software. Specifically, we use the approximate feedback composition defined in \cite{tabuada2009}. We show that once two systems are approximately feedback composable, then QSR dissipativity of even one of those systems implies QSR dissipativity of the composition.

This paper is organized as follows. \Cref{background} describes the system model and some important dissipativity notions and system relations. \Cref{main result} addresses the relationship between QSR dissipativity of approximately input-output similar systems. 
We also present results to quantify the dissipativity matrices of approximately input-output similar systems for particular abstraction methods. \Cref{example} discusses examples to illustrate the results. In \Cref{composition}, some results on the dissipativity of approximate feedback composition of finite transition systems are provided. Finally, \Cref{conclusion} contains concluding remarks.

\highlighttext{
\begin{table}[tbhp]
\caption{\highlighttext{\bf Notation}}
\label{tab:notation}
\centering
\begin{tabular}{|c|c|} \hline
\highlighttext{\bf Notation} & \highlighttext{\bf Meaning}  \\ \hline
$\| \cdot \| $ & infinity norm \\ 
$\| \cdot \|_2 $ & Euclidean norm (or induced 2-norm if argument is a matrix) \\
$\mathbb{Z}_0^+$ & set of nonnegative integers\\
$\mathbb{R}$ & set of real numbers\\
$\mathbb{R}^+$ & set of positive real numbers\\
$\mathbb{R}^+_0$ & set of nonnegative real numbers\\
$\mathbb{R}^n$ & Euclidean space of dimension $n$\\
$\vv{0}$ & zero vector of appropriate dimension \\
$P>Q$ & matrix $(P-Q)$ is positive definite 
\\\hline
\end{tabular}
\end{table}
}

\section{Preliminaries}
\label{background}
In this section we briefly explain few important notions of abstract systems and introduce the properties of dissipativity, passivity and incremental forward completeness. 

\subsection{System Description}
The system model we use in this work is that of transition systems. The following definitions are standard and can be found in \cite{zamani2012}.

\begin{definition} A transition system \(T=(X,U,\longrightarrow,Y_m,\mathcal{H})\) consists of:
	\begin{itemize}
		\item a set of states X;
		\item a set of inputs U;
		\item a transition map \(\longrightarrow:X\times U\rightarrow X\);
		\item a set of outputs \(Y\);
		\item and an output map \(\mathcal{H}:X\times U\rightarrow Y\).
	\end{itemize}
\end{definition}
If for any state $x\in X$ and $u \in U$ there exists at most one state $x' \in X$ such that $x \xrightarrow [ \quad  \quad]{u}x'$ then the system is deterministic. $x'$ is also known as the $u$-successor of $x$. If the system is nondeterministic, then for a transition $x \xrightarrow [ \quad  \quad]{u}x'$ the state $x'$ may not be unique.  In such a case $x'$ belongs to a set of all possible $u$-successors given by ${Post}_u(x)$ and we will use $U(x)$ to denote the set of inputs $u \in U$ for which  ${Post}_u(x)$ is nonempty.
Suppose the transition system is equipped with metrics $\mathbf{d_X}: X\times X \longrightarrow \mathbb{R}_0^+$, $\mathbf{d_U}: U\times U \longrightarrow \mathbb{R}_0^+$ and $\mathbf{d_Y}: Y\times Y \longrightarrow \mathbb{R}_0^+$ representing the ``distance" between two elements of state space, input space and output space respectively. This transition system is referred to as a \textbf{metric transition system}. \highlighttext{Throughout this work, we assume that the transition system allows for a notion of inner product between inputs and outputs and the distance metric $\mathbf{d_X}$, $\mathbf{d_U}$ and $\mathbf{d_Y}$ are infinity norms $\|\cdot\|$.} 

Transition systems can be used to describe a large class of dynamical systems. We restrict ourselves to continuous time dynamical systems of the form \begin{equation} \label{sigma}
	\Sigma=(X,U,Y_m,f,h_m)
\end{equation}
where \highlighttext{$X = \mathbb{R}^n$ is the state space; $U\subseteq \mathbb{R}^m:\{\vv{0}\} \in U$ is the input space; $Y_m \subseteq \mathbb{R}^p: \{\vv{0}\}\in Y_m$ is the measured output space; $f: X \times U \to X$ is a Lipschitz continuous map describing state transition and $h_m: X \times U \to Y_m$ is the measured output map. At at any time $t\in \mathbb{R}_0^+$, the state, input and measured output of $\Sigma$ are $x(t) \in X$, $u(t) \in U$, $y_m(t) \in Y_m$ and the state and measured output evolve as $\dot{x}(t) = f(x(t),u(t))$ and $y_m(t) = h_m(x(t),u(t))$.} If $\xi: ]a, b[ \xrightarrow [ \quad]{} X$ is a solution  of the differential equation $\dot{x}(t)=f(x(t), u(t))$, then we will use $\xi( t, x, u)$ to denote a unique point reached at time $t$ under the input signal $u:[0,t]\rightarrow U$ from an initial condition $x \in X$. The transition system associated with $\Sigma$ is then given by $T(\Sigma)=(
X,U,\longrightarrow,Y_m,h_m)$ where the state transition map is dictated by the differential equation $\dot{x}=f(x, u)$. 
We use notation $T(\Sigma)$ and $T$ interchangeably in this work. Also, \highlighttext{for ease of notation, we will often drop the time index when referring to the state, input and output of $\Sigma$.}

For dissipativity analysis, we consider a separate system output dictated by system output space $Y\subseteq \mathbb{R}^m$ and output map $h(x(t),u(t)):X\times U \to Y$. The measured output $h_m(x(t),u(t))$ can be different from system output $h(x(t),u(t))$. In this work, we consider two different cases (i) when measured output is the system states, and (ii) when measured output is same as the system output. The measured output space $Y_m$ and output map $h_m$ take values accordingly.

We can also define a discrete time system $\Sigma_d=(X_d,U_d,Y_{m_d},f_d,h_{m_d})$ where $X_d,\ U_d,$ $Y_{m_d}, \ f_d$ and $h_{m_d}$ are state space, input space, measured output (or measurement) space, state transition map and measured output transition map respectively. At any discrete time $k$ the system state $x_d(k)\in X_d$, input $u_d(k)\in U_d$ and output $y_{m_d}(k)\in Y_{m_d}$ evolve in discrete time steps as $x_d(k+1)=f_d(x_d(k),u_d(k))$ and $y_{m_d}(k) = h_{m_d}(x_d(k),u_d(k))$ for all $k\in \mathbb{Z}_0^+$. Similar to continuous time case, system output (used for dissipativity)  dictated by system output space $Y_d \subseteq \mathbb{R}^m$ and output map $h_d(x_d(k),u_d(k)):X_d\times U_d \to Y_d$, can be different from measured output.

The following assumptions on system behavior are useful in deriving the main results of this paper. 

\begin{assumption}\label{incrementally forward complete}(Incremental forward completeness \cite{zamani2012}) The dynamical system \(\Sigma\) is said to be incrementally forward complete if there exist continuous functions \(\alpha_1 : \mathbb{R}^+_0 \times \mathbb{R}^+_0 \rightarrow \mathbb{R}^+_0\), and $\alpha_2: \mathbb{R}_0^+ \times \mathbb{R}_0^+ \to \mathbb{R}_0^+$, \(\alpha_1(\cdot,t), \alpha_2(\cdot,t)\in \mathcal{K}_{\infty}\) for every \(t \geq 0\), such that for any two initial conditions \(x_1,x_2 \in X\), any input trajectories $v_1,v_2: [0,t] \in U$, and for any \(t\in \mathbb{R}^+_0\) the following bound holds: 
	\begin{equation}\label{beta_gamma}
	\|\xi( t, x_1, v_1)-\xi( t, x_2, v_2)\|\leq \alpha_1(\|x_1-x_2\|,t)+\alpha_2(\|v_1-v_2\|,t)
	\end{equation}
	where \(\xi(t,x_i, v_i)\) is the state trajectory of system with input \(v_i\) and initial state \(x_i\). 
\end{assumption}

It should be noted that incremental forward completeness is a weaker condition than incremental stability for it does not require the system to be stable. It only states that the distance between any two state trajectories is bounded.

\begin{assumption}\label{output rate gain bound} 
	Assume that the operator from input $u(t)$ to rate of change of system output $\dot{y}(t)$ has the finite $\mathcal{L}_2$ gain $\gamma$, that is 
	$$
	\int_0^{\tau}\|\dot{y}(t)\|^2_2\text{d}t\leq \gamma^2 \int_0^{\tau}\|u(t)\|^2_2\text{d}t
	$$
	for any ${\tau}\geq 0$ and admissible input $u(t)$.
\end{assumption}

Assumption \ref{output rate gain bound} is an $\mathcal{L}_2$ gain condition which bounds the rate at which the output $y$ can change with respect to time.

We define transition systems $ T_{\tau}(\Sigma) $ obtained after sampling $\Sigma$, and $T_{\tau,\mu,\eta}(\Sigma)$ obtained after appropriate sampling and quantization of $\Sigma$ as follows.
\begin{definition} \cite{zamani2012}\label{def:discrete-time sys} 
	Let $\Sigma$  be a dynamical system and the associated transition system be $T(\Sigma)$. \hspace{-1mm} For any $\tau >0$, the sampled transition
	system $ T_{\tau}(\Sigma)\hspace{-1mm}:=\hspace{-1mm}(X_{\tau},U_{\tau},  \xrightarrow [\tau]{u_{\tau}},$ $Y_{m_{\tau}},\mathcal{H}_{m_{\tau}})$ is defined by:
	\begin{itemize}
		\item $X_{\tau}=X$;
		\item $U_{\tau}= U$;
		\item $x_\tau \xrightarrow [\quad \tau \quad]{u_{\tau}} x_{\tau}'$, if there exists a trajectory $\xi: [0, \tau]\xrightarrow [\quad \quad]{}$ $\xi(\tau, x_\tau,u)=x_{\tau}'$ where $u:[0,\tau)\to{u_\tau}$, $u_\tau\in U_\tau$;
		\item $Y_{m_{\tau}}=Y_m$;
		\item \highlighttext{$\mathcal{H}_{m_\tau}(x_\tau,u_\tau)=h_m(x_\tau,u_\tau)$ where $x_\tau\in X_\tau, u_\tau\in U_\tau$.}
	\end{itemize}
\end{definition}
$T_\tau(\Sigma)$ evolves in discrete time and can be represented as a discrete time system (like $\Sigma_d$). As discussed before, for dissipativity analysis, we consider another output (referred to as system output in this work) associated with $T_\tau(\Sigma)$. At any discrete time step $k$, system output is described by the output space $Y \subseteq \mathbb{R}^m$ and output function $h(x(k),u(k))=y(k) \in Y$ where $x(k)\in X_\tau$ and $u(k) \in U_\tau$.

\begin{definition} \label{sampled-quantized system different output} For any incrementally forward complete control system $\Sigma$, with system states as measurement, i.e., $h_m(x,u)=x$, and parameters $\tau >0$, $\eta >0$, $\mu >0$ and design parameters $\theta_1, \theta_2 \in \mathbb{R}^+$,  a countable transition system  can be defined as, $\label{T_tau_mu_eta1} T_{\tau, \mu, \eta}(\Sigma):=(X_{\textbf{q}},U_{\textbf{q}}, \xrightarrow [\quad \tau \quad]{u_{\textbf{q}}}, Y_{m_q}, \mathcal{H}_{m_q} )$, where:
	\begin{itemize}
		\item \(X_{\textbf{q}}=[X]_{\eta}=\begin{Bmatrix}
		x\in X| x_i = k_i\eta, k_i \in \mathbb{Z},\)  and \( i = 1,2,\dots, n
		\end{Bmatrix}\);
		\item $U_{\textbf{q}}=[U]_{\mu}=\begin{Bmatrix}
	u\in U| u_i = k_i\mu, k_i \in \mathbb{Z}, \text{ and } i = 1,2, \dots, m
		\end{Bmatrix}$;
		\item $x_{\textbf{q}} \xrightarrow [\quad \tau \quad]{u_{\textbf{q}}} x_{\textbf{q}}'$, if  $\|\xi(\tau, x_{\textbf{q}}, u_{\textbf{q}})-x_{\textbf{q}}'\|\leq \alpha_1(\theta_1, \tau)+\alpha_2(\theta_2, \tau)+\eta/2$;
		\item $Y_{m_q}=[X]_{\eta}=\begin{Bmatrix}
		x\in X| x_i = k_i\eta, k_i \in \mathbb{Z}, \text{ and } i = 1,2, \dots, n
		\end{Bmatrix};$
		\item \highlighttext{$\mathcal{H}_{m_q}(x_q,u_q)=x_q$ where $x_q \in X_q, u_q \in U_q$.} 
	\end{itemize}
	$\alpha_1$ and $\alpha_2$ are functions from the definition of incremental forward completeness in \Cref{incrementally forward complete}. 
\end{definition}

Here, \highlighttext{\(\mathcal{H}_{m_q}(x_q,u_q)=x_q\) indicates that measured output is same as the system states. At any discrete time instant $k\in \mathbb{Z}^+_0$ system output (used for dissipativity analysis) of $T_{\tau, \mu, \eta}(\Sigma)$ here is \(h(x_q(k),u_q(k))\) where $x_q(k)\in X_q$ and $u_q(k) \in U_q$.} Note that this set up is often useful when analyzing systems with state feedback.

We also define the sampled and quantized transition system for the case when system output is the measurement.

\begin{definition} \label{sampled-quantized system same output} For any incrementally forward complete control system $\Sigma$, with system output as measurement, i.e., $h_m(x,u)=h(x,u)$ and parameters $\tau >0$, $\eta >0$, $\mu >0$ and design parameters $\theta_1, \theta_2 \in \mathbb{R}^+$,  a countable transition system  can be defined as $\label{T_tau_mu_eta2} T_{\tau, \mu, \eta}(\Sigma):=(X_{\textbf{q}},U_{\textbf{q}}, \xrightarrow [\quad \tau \quad]{u_{\textbf{q}}}, Y_{m_q}, \mathcal{H}_{m_q} )$,	where:
	\begin{itemize}
		\item $X_{\textbf{q}}=[X]_{\eta}=\begin{Bmatrix}
		x\in X| x_i = k_i\eta, k_i \in \mathbb{Z},\)  and \( i = 1,2,\dots, n
		\end{Bmatrix}\);
		\item $U_{\textbf{q}}=[{U}]_{\mu}=\begin{Bmatrix}
		u\in U| u_i = k_i\mu, k_i \in \mathbb{Z}, \text{ and } i = 1,2, \dots, m
		\end{Bmatrix}$;
		\item $x_{\textbf{q}} \xrightarrow [\quad \tau \quad]{u_{\textbf{q}}} x_{\textbf{q}}'$, if  $\|\xi(\tau, x_{\textbf{q}}, u_{\textbf{q}})-x_{\textbf{q}}'\|\leq \alpha_1(\theta_1, \tau)+\alpha_2(\theta_2, \tau)+\eta/2$;
        \item $Y_{m_q}=[Y]_{\mu}=\begin{Bmatrix}
		y\in Y| y_i = k_i\mu, k_i \in \mathbb{Z}, \text{ and 
		} i = 1,2, \dots, m
		\end{Bmatrix}$;
		\item \highlighttext{$\mathcal{H}_{m_q}(x_q,u_q)= h_q(x_q,u_q): \|h_q(x_q,u_q)-h(x_q,u)\|\leq \mu/2$ where $u:[0,\tau)\to u_q$, $x_q \in X_q, u_q \in U_q.$}
	\end{itemize}
	$\alpha_1$ and $\alpha_2$ are functions from the definition of incremental forward completeness in \Cref{incrementally forward complete}. 
\end{definition}

As compared to \cref{sampled-quantized system different output}, since \Cref{sampled-quantized system same output} has system output  as measurement, system output (used for dissipativity analysis) is also quantized and is described by the output map $\mathcal{H}_{m_q}: X_q\times U_q \to Y_{m_q}$ in \Cref{sampled-quantized system same output}.

The transition system $T_{\tau, \mu, \eta}(\Sigma)$ can be countably finite or infinite depending on the size of state and input spaces. For most practical cases, the system states and inputs are restricted due to the physical limitations of the system leading to a countably finite $T_{\tau, \mu, \eta}(\Sigma)$.

\subsection{System Relations}
Abstraction is an approach to reduce the complexity of the description of dynamical systems. 
One of the popular approaches for abstraction is using the notion of  approximate simulation \cite{girard2}. 
Since dissipativity is an input-output property, we talk about a generalized notion of approximate input-output simulation \cite{tazaki}. 

Consider two metric transition systems $T_1$ and $T_2$. The approximate input - output simulation relation can be defined as follows. 
\begin{definition} \label{def:IOS} \(T_2\) is an approximate input-output simulation of \(T_1\) with precision \((\epsilon_u,\epsilon_y)\) if there exists an approximate input-output simulation relation \(\mathcal{R} \subseteq X_1 \times X_2\) such that for all \(x_1\in X_1\), there exists \(x_2\in X_2\) such that \(({x}_1,{x}_2)\in \mathcal{R}\) and, for all \((x_1,x_2)\in \mathcal{R}\):
	\begin{enumerate}
		\item for all $u_1 \in U_1(x_1)$ there exists $u_2 \in U_2(x_2)$ such that  $\mathbf{d_U}(u_1,u_2)\leq \epsilon_u$ and $\mathbf{d_Y}(\mathcal{H}_1(x_1,u_1), \mathcal{H}_2(x_2,u_2))\leq \epsilon_y$,
		\item for all $u_1 \in U_1(x_1)$ there exists $u_2 \in U_2(x_2)$ such that  $\mathbf{d_U}(u_1,u_2)\leq \epsilon_u$ and    $x_1 \xrightarrow [\quad1 \quad]{u_1}x'_1$ in $T_1$ implies the existence of  $x_2 \xrightarrow [\quad 2 \quad]{u_2}x'_2$ in $T_2$ such that $ (x'_1,x'_2)\in \mathcal{R}$.
	\end{enumerate}
	This is denoted as $T_1\preceq^{(\epsilon_u, \epsilon_y)}_{IOS} T_2$.
	
\end{definition}    

\highlighttext{We can decompose the input and output vectors into two groups, internal and external signals as in \cite{tazaki-hscc-2008} to facilitate the discussion on system interconnection. It is easy to see that in this case, \Cref{def:IOS} becomes a special case of interconnection-compatible approximate simulation in \cite{tazaki-hscc-2008} where not only the internal inputs but also external inputs of the two interconnection-compatible approximately similar systems are required to be close enough to each other. As such, under mild conditions (Theorem 1 in \cite{tazaki-hscc-2008}), we can interconnect abstractions (approximate input-output simulation) of systems  to obtain an abstraction of interconnection of systems.} 

Similarly, we can also define the notion of approximate input-output alternating simulation to take into account the non-deterministic nature of system trajectories. 

\begin{definition} \cite{zamani2012} \label{def:IOAS}
	\(T_2\) is an approximate input-output alternating simulation of \(T_1\) with precision \((\epsilon_u,\epsilon_y)\) if there exists an approximate input-output alternating simulation relation \(\mathcal{R} \subseteq X_1 \times X_2\) such that for all \(x_1\in X_1\), there exists \(x_2\in X_2\) such that \(({x}_1,{x}_2)\in \mathcal{R}\) and, for all \((x_1,x_2)\in \mathcal{R}\):	
	\begin{enumerate}
		\item for all $u_1 \in U_1(x_1)$ there exists $u_2 \in U_2(x_2)$ such that  $\mathbf{d_U}(u_1,u_2)\leq \epsilon_u$ and $\mathbf{d_Y}(\mathcal{H}_1(x_1,u_1), \mathcal{H}_2(x_2,u_2))\leq \epsilon_y$,
		\item for all $u_1 \in U_1(x_1)$ there exists $u_2 \in U_2(x_2)$ such that  $\mathbf{d_U}(u_1,u_2)\leq \epsilon_u$ and for every $x'_2 \in {Post}_{u_2}(x_2)$ there exists $x'_1 \in Post_{u_1}(x_1)$ such that $ (x'_1,x'_2)\in \mathcal{R}$.
	\end{enumerate}
	This is denoted as \(T_1\preceq^{(\epsilon_u, \epsilon_y)}_{IOAS} T_2\).
\end{definition}

\noindent The two notions of alternating approximate input-output simulation and approximate input-output simulation coincide in the special case of deterministic systems.

\subsection{Dissipativity} \label{subsection:dissipativity}

A dynamical system is said to be dissipative if it stores and dissipates energy but does not generate any energy of its own. The notion of energy mentioned here is general and is captured using the energy supply rate function. As described in \cite{willems}, the supply rate function is a real valued function described on the input and output space and is locally integrable, i.e., $\omega: U \times Y \longrightarrow \mathbb{R}, \,\, \int_{t_0}^{t_1}|\omega(u(t),y(t)|dt<\infty \,\, \forall \ t_1,t_0\in\mathbb{R}_0^+$.
\begin{definition}\label{def:dissipativity cont} \cite{willems}
\highlighttext{A continuous time system $\Sigma$ with output function $y(t) = h(x(t),u(t))$, is said to be dissipative with respect to the supply rate function $\omega(u,y)$, if there exists a nonnegative function \highlighttext{$V:X\rightarrow\mathbb{R}^+$}, called the storage function, such that for all $t_1\geq t \geq t_0\geq 0$, $x(t_0)\in X$ and $u(t) \in U$
	\begin{equation}
	\int_{t_0}^{t_1}\omega(u(t),  y(t))dt \geq V(x(t_1))-V(x(t_0))
	\end{equation}
	holds where $x(t_1)$ is the state at time $t_1$ resulting from the initial condition $x(t_0)$ and input function $u(\cdot)$. }
\end{definition}

\Cref{def:dissipativity cont} discusses dissipativity of continuous time system. 
Similarly, we can define dissipativity for discrete time system (or sampled transition system) as follows.

\begin{definition}\cite{byrnes1994}\label{def:dissipativity discrete}
\highlighttext{A discrete time system $\Sigma_d$ with output function $y(k)= h(x(k),u(k))$, is said to be dissipative with respect to the supply rate function $\omega(u,y)$, if there exists a nonnegative function $V:X_d\rightarrow\mathbb{R}^+$, called the storage function, such that for all $k\geq k_0\geq 0$, $x_0\in X_d$ and $u(k) \in U_d$
	\begin{equation} \label{diss inequality discrete1}
	\sum_{i=k_0}^{k-1} \omega(u(i),  y(i)) \geq V(x(k))-V(x(k_0))
	\end{equation}
	holds where $x(k)$ is state at $k$ resulting from the initial condition $x(k_0)$ and input function $u(\cdot)$.}
\end{definition}

We can obtain different dissipativity structures and system properties depending on the choice of supply rate function.
\begin{definition} \label{def:QSR} \highlighttext{(QSR dissipativity and passivity)
\begin{itemize}
\item A continuous time system $\Sigma$ is said to be \(QSR\) dissipative if it is dissipative with respect to the supply rate \(\omega(u,y) = y^TQy + 2 y^TSu + u^TRu\). Dissipativity inequality is then given as
	\begin{equation}\label{QSRcontinuous}
	\int_{t_0}^{t_1} (y^T(t)Qy(t) + 2 y^T(t)Su(t) + u^T(t)Ru(t))dt \geq V(x(t_1))-V(x(t_0))
	\end{equation}
	where \(Q\), \(S\) and \(R\) are matrices of appropriate dimensions.
    \item Similarly, a discrete time system $\Sigma_d$ is said to be \(QSR\) dissipative if it is dissipative with respect to the supply rate \(\omega(u,y) = y^TQy + 2 y^TSu + u^TRu\). Dissipativity inequality is then given as
	\begin{equation}\label{QSRdiscrete}
	\sum_{i=k_0}^{k-1} (y^T(i)Qy(i) + 2 y^T(i)Su(i) + u^T(i)Ru(i)) \geq V(x(k))-V(x(k_0))
	\end{equation}
	where \(Q\), \(S\) and \(R\) are matrices of appropriate dimensions.
    \item \label{passivity}
	A continuous time system $\Sigma$ (or discrete time system $\Sigma_d$) is said to be input feed-forward output feedback passive if it is QSR dissipative with \(Q = -\rho \mathbf{I}\), \(S = \frac{1}{2} \mathbf{I}\), \(R = -\nu \mathbf{I}\) with $\rho, \nu \geq 0$.  
\end{itemize}}  
\end{definition}

In this work, we refer to \(Q, \text{\space}S, \text{ and } R\) matrices as dissipativity matrices. A special case of QSR dissipativity is input feed-forward output feedback passivity with $\nu $ and $\rho$ known as the input and output passivity index. In this work, we use the terms passive and input feed-forward output feedback passive interchangeably. 

\begin{definition} (Quasi-dissipativity \cite{lozano2007}) \label{def:quasi-dissipative}\highlighttext{
A continuous time system $\Sigma$ (or discrete time system $\Sigma_d$) is said to be quasi-dissipative with respect to $\omega(u,y)$ if there exists a constant $\beta \leq 0$ such that it is dissipative with respect to the supply rate $\omega(u,y)-\beta$.}
\end{definition}

The boundedness and stability of quasi-dissipative systems is discussed in \cite{polushin2004}. Similar to dissipativity, quasi-dissipativity also can take different forms depending on the structure of supply rate function $\omega(u,y)$. Quasi-QSR-dissipativity and quasi-passivity are two special cases that are of interest to us. They can be defined similar to QSR dissipativity and passivity described earlier. In this work, dissipativity for systems in \Cref{sampled-quantized system different output} and \Cref{sampled-quantized system same output} is discussed in the context of quasi-dissipativity where, the presence of $\beta$ on the right hand side of \cref{quasi-diss inequality discrete2} indicates the energy generated due to quantization process.

\begin{remark} 
\highlighttext{The definition of dissipativity here is independent of system representation, i.e., if $\Sigma$ is dissipative then $T(\Sigma)$ is also dissipative. Also note that while transition system $T(\Sigma)$ follows dissipativity definition for continuous time system, trajectories of transition systems in \Cref{def:discrete-time sys}, \Cref{sampled-quantized system different output} and \Cref{sampled-quantized system same output} evolve in discrete time and hence follow the dissipativity definition for discrete-time systems.} \highlighttext{ Moreover, \cite{byrnes1994} showed that for discrete time systems, \cref{diss inequality discrete1} holds if and only if
\begin{equation} \label{diss inequality discrete2}
\omega(u(k),y(k)) \geq V(x(k+1))-V(x(k))
\end{equation}
for all $k \in \mathbb{Z}_0^+,\ u(k) \in U$ and $ x(k) \in X$. Equivalent condition for quasi-dissipativity is,
\begin{equation} \label{quasi-diss inequality discrete2}
\omega(u(k),y(k)) \geq V(x(k+1))-V(x(k))+\beta \ \ \forall \ k \in \mathbb{Z}_0^+,\ u(k) \in U_d,\ x(k) \in X_d.
\end{equation}}
\end{remark} 




\section{Dissipativity of systems and their abstractions}
\label{main result}  
In this section we discuss the relationship between the dissipativity properties of a continuous system and its abstraction. 
We provide two main results. First, we analyze the dissipativity of a continuous dynamical system when its approximate input-output simulation is QSR dissipative. Secondly, we consider the reverse problem of determining the dissipativity of a system abstraction. We provide conditions under which QSR dissipativity of a continuous system implies QSR dissipativity of its discrete abstraction obtained using the approach in \cite{zamani2012}.

\subsection{Dissipativity of system from its QSR dissipative abstraction}

Consider two continuous time systems $\Sigma_1$ and $\Sigma_2$ and corresponding transition systems $T_1(\Sigma_1)$ and $T_2(\Sigma_2)$ 
. Let $u_i$, $y_i$ and $x_i$ represent respectively the input, output and states of \(T_i\), $i\in\{1,2\}$ and both these systems allow a notion of inner product between their inputs and outputs. Suppose \(T_2\) is QSR dissipative with \(Q_2\), \(S_2\), \(R_2\) as the dissipation matrices. If \(T_2\) approximately simulates \(T_1\) then the following theorem gives the conditions under which \(T_1\) is \(QSR\) dissipative.
\begin{theorem} \label{result1} \hspace{-2mm}
If \(T_1(\Sigma_1)\preceq^{(\epsilon_u, \epsilon_y)}_{IOS} T_2(\Sigma_2)\) and \(T_2(\Sigma_2)\) is \(QSR\) dissipative, then \(T_1(\Sigma_1)\) is also QSR dissipative with matrices \(Q_1, S_1, R_1\) satisfying
	\begin{gather}
	\label{cond1}
	\begin{aligned}
	\underline{\lambda}(Q_1-Q_2)-\zeta_1\|Q_1\|^2_2 -\zeta_3 &\geq 0\\
	\underline{\lambda}(R_1-R_2)- \zeta_2\|S_1\|^2_2 -\zeta_4 \|R_1\|^2_2 &\geq 0\\
	S_1 &= S_2
	\end{aligned}
	\end{gather}
	where \(\zeta_1\), \(\zeta_2\), \(\zeta_3\), \(\zeta_4\) \(\in \mathbb{R}^+\) are arbitrary non-zero constants, \(Q_2, S_2, R_2\) are the dissipativity matrices for \(T_2(\Sigma_2)\) and \(\underline{\lambda}(\cdot)\) represents the smallest eigen value of the matrix in discussion.
\end{theorem}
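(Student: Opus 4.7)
The plan is to exploit the QSR dissipativity of $T_2(\Sigma_2)$ together with the closeness of matched input--output pairs guaranteed by the approximate input-output simulation relation $\mathcal{R}$. First, I would fix an arbitrary trajectory $(x_1(\cdot), u_1(\cdot), y_1(\cdot))$ of $T_1(\Sigma_1)$ and use \Cref{def:IOS} to construct a matched trajectory $(x_2(\cdot), u_2(\cdot), y_2(\cdot))$ of $T_2(\Sigma_2)$ satisfying $(x_1(t), x_2(t)) \in \mathcal{R}$, $\|u_1(t) - u_2(t)\| \le \epsilon_u$, and $\|y_1(t) - y_2(t)\| \le \epsilon_y$ for all $t$. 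Setting the candidate storage function for $T_1$ as $V_1(x_1) := \inf\{V_2(x_2) : (x_1, x_2) \in \mathcal{R}\}$ reduces the task to showing that, along matched trajectories, $\omega_1(u_1, y_1)$ dominates $\omega_2(u_2, y_2)$ up to a bounded-in-$\epsilon$ correction.

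Next, I would write $y_1 = y_2 + \Delta y$ and $u_1 = u_2 + \Delta u$ with $\|\Delta y\|, \|\Delta u\|$ controlled by $\epsilon_y, \epsilon_u$, expand $\omega_1 = y_1^T Q_1 y_1 + 2 y_1^T S_1 u_1 + u_1^T R_1 u_1$, and subtract $\omega_2 = y_2^T Q_2 y_2 + 2 y_2^T S_2 u_2 + u_2^T R_2 u_2$. The choice $S_1 = S_2$ cancels the $S$-mismatch contribution, leaving
\begin{equation*}
\omega_1 - \omega_2 = y_2^T (Q_1 - Q_2) y_2 + u_2^T (R_1 - R_2) u_2 + \mathcal{C} + \mathcal{D},
\end{equation*}
where $\mathcal{C}$ collects the four cross terms $2 y_2^T Q_1 \Delta y$, $2 y_2^T S_1 \Delta u$, $2 \Delta y^T S_1 u_2$, $2 u_2^T R_1 \Delta u$, and $\mathcal{D}$ collects the purely-$\Delta$ quadratic terms. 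Applying Young's inequality $2 a^T b \ge -\zeta \|a\|_2^2 - \zeta^{-1}\|b\|_2^2$ with orientations chosen so that $\|y_2\|_2^2$ picks up coefficients $\zeta_1 \|Q_1\|_2^2$ (from the $Q_1$ cross term) and $\zeta_3$ (from the $y_2$-sided $S_1$ cross term), while $\|u_2\|_2^2$ picks up $\zeta_4 \|R_1\|_2^2$ and $\zeta_2 \|S_1\|_2^2$, and then combining with $y_2^T (Q_1 - Q_2) y_2 \ge \underline{\lambda}(Q_1 - Q_2) \|y_2\|_2^2$ and the analogous bound for $R_1 - R_2$, the conditions in \eqref{cond1} are precisely what forces the net coefficients of $\|y_2\|_2^2$ and $\|u_2\|_2^2$ to be nonnegative. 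The residue from $\mathcal{D}$ and the $\zeta^{-1}\|\Delta\|^2$ tails of Young's inequality is then bounded by a finite constant $K_\epsilon$ depending only on $\epsilon_u, \epsilon_y$, the matrix norms $\|Q_1\|_2, \|S_1\|_2, \|R_1\|_2$, and the $\zeta_i$'s.

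Integrating over $[t_0, t_1]$ and invoking the QSR dissipativity of $T_2$ yields $\int_{t_0}^{t_1} \omega_1 \, dt \ge V_2(x_2(t_1)) - V_2(x_2(t_0)) - K_\epsilon(t_1 - t_0)$, delivering QSR dissipativity of $T_1(\Sigma_1)$ with storage function $V_1$ (in the quasi-dissipative sense of \Cref{def:quasi-dissipative}, as flagged in the remark of \Cref{subsection:dissipativity}). I expect the main obstacle to be the second step: the two $S_1$-cross-terms each admit two Young orientations, and matching the precise form of \eqref{cond1} requires placing $2 y_2^T S_1 \Delta u$ onto $\|y_2\|_2^2$ (giving contribution $\zeta_3$) while placing $2 \Delta y^T S_1 u_2$ onto $\|u_2\|_2^2$ (giving $\zeta_2 \|S_1\|_2^2$); additionally, well-definedness and nonnegativity of $V_1$ when $\mathcal{R}$ is non-functional requires care, handled via the infimum construction and a standard $\varepsilon$-approximation argument when propagating the storage along trajectories.
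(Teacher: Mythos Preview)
Your core computation---writing $y_1=y_2+\Delta y$, $u_1=u_2+\Delta u$, expanding $\omega_1-\omega_2$, and applying Young's inequality to the four cross terms with exactly the orientations you describe so that the coefficients on $\|y_2\|_2^2$ and $\|u_2\|_2^2$ match \eqref{cond1}---is precisely the paper's argument. The identification $S_1=S_2$ to kill the $S$-mismatch and the eigenvalue lower bounds on $Q_1-Q_2$, $R_1-R_2$ are used identically.

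The one substantive difference is the concluding mechanism. The paper does \emph{not} build an explicit storage function for $T_1$; instead, after obtaining $r_1\geq r_2-(\text{correction})$ it argues that the available storage
\[
S_{a_1}(x_1)=\sup_{u_1(\cdot),\,t_1\geq t_0}\Bigl(-\int_{t_0}^{t_1}\omega_1\,dt\Bigr)
\]
is finite (since $-r_2$ is bounded above by $S_{a_2}(x_2)<\infty$ for the matched $x_2$) and then invokes the standard characterization ``available storage finite $\Rightarrow$ dissipative'' (Theorem 3.1.11 in van der Schaft). This is how the paper claims \emph{full} QSR dissipativity, with the $\Delta$-tails treated as bounded constants rather than as a time-linear bias. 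Your infimum-storage construction $V_1(x_1)=\inf\{V_2(x_2):(x_1,x_2)\in\mathcal{R}\}$ is a legitimate alternative, but as you correctly flag it delivers only quasi-dissipativity with offset $K_\epsilon(t_1-t_0)$, which is strictly weaker than the theorem as stated. If you want to match the paper's conclusion you should switch to the available-storage argument in the last step; otherwise your write-up proves a quasi-QSR version of the result.
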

\begin{proof}
	See Appendix. 
\end{proof}

Although the result of \Cref{result1} is derived for continuous time systems, it can be extended to discrete time dynamical systems as well. This result is general in the sense that it is applicable irrespective of the method which is used to obtain the approximate input-output simulation. We can also use it to compute upper bounds on the passivity indices of transition system \(T_1\). 

\begin{corollary}
	If \(T_1(\Sigma_1)\preceq^{(\epsilon_u, \epsilon_y)}_{IOS} T_2(\Sigma_2)\) and \(T_2(\Sigma_2)\) is passive with passivity indices \(\rho_2\), \(\nu_2\) then \(T_1(\Sigma_1)\) is also passive with passivity indices \(\rho_1\), \(\nu_1\) which satisfy the following
	\begin{gather}
	\begin{aligned}
	\rho_1(1+\zeta_1\rho_1) \leq \rho_2 - \zeta_3 \\
	\nu_1(1+\zeta_4\nu_1) \leq \nu_2 -\zeta_2,
	\end{aligned}
	\end{gather}
	where \(\zeta_1\), \(\zeta_2\), \(\zeta_3\), \(\zeta_4\) \(\in \mathbb{R}^+\) are arbitrary non-zero constants such that \(\rho_2 - \zeta_3 >0\) and \(\nu_2 -\zeta_2>0\), and \(\rho_2, \nu_2\) are the passivity indices for \(T_2(\Sigma_1)\).
\end{corollary}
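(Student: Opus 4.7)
The plan is to derive the corollary as a direct specialization of \Cref{result1} to the passive case. By \Cref{def:QSR}, input feed-forward output feedback passivity is the subclass of \(QSR\) dissipativity obtained by setting \(Q=-\rho\mathbf{I}\), \(S=\tfrac{1}{2}\mathbf{I}\), \(R=-\nu\mathbf{I}\). So the strategy is to write \(Q_i=-\rho_i\mathbf{I}\), \(S_i=\tfrac{1}{2}\mathbf{I}\), \(R_i=-\nu_i\mathbf{I}\) for \(i=1,2\), substitute into the three inequalities \cref{cond1}, and show that these reduce exactly to the two scalar inequalities in the corollary statement.

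The third condition of \Cref{result1}, namely \(S_1=S_2\), is satisfied automatically because \(S_1=S_2=\tfrac{1}{2}\mathbf{I}\). For the first condition, I would compute \(Q_1-Q_2=(\rho_2-\rho_1)\mathbf{I}\), so \(\underline{\lambda}(Q_1-Q_2)=\rho_2-\rho_1\), and \(\|Q_1\|_2^2=\rho_1^2\); plugging in yields \(\rho_2-\rho_1-\zeta_1\rho_1^2-\zeta_3\geq 0\), which rearranges to \(\rho_1(1+\zeta_1\rho_1)\leq \rho_2-\zeta_3\). The nonnegativity of \(\rho_1\) is ensured by requiring \(\rho_2-\zeta_3>0\), which is part of the corollary's hypothesis.

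For the second condition, I would compute \(R_1-R_2=(\nu_2-\nu_1)\mathbf{I}\), so \(\underline{\lambda}(R_1-R_2)=\nu_2-\nu_1\), and \(\|R_1\|_2^2=\nu_1^2\), while \(\|S_1\|_2^2=\tfrac{1}{4}\). Substitution gives \(\nu_2-\nu_1-\tfrac{1}{4}\zeta_2-\zeta_4\nu_1^2\geq 0\). Since \(\zeta_2\) is an arbitrary positive constant, I would relabel the constant \(\tfrac{1}{4}\zeta_2\) as \(\zeta_2\) (absorbing the factor \(\tfrac{1}{4}\) without loss of generality, because the corollary only asserts the existence of such constants), obtaining \(\nu_1(1+\zeta_4\nu_1)\leq \nu_2-\zeta_2\). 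Positivity of \(\nu_1\) again follows from the hypothesis \(\nu_2-\zeta_2>0\).

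Since the entire argument is a substitution into an already-proved result, there is no real obstacle; the only subtle point is the handling of the \(\tfrac{1}{4}\) factor arising from \(\|S_1\|_2^2\), which is resolved by observing that \(\zeta_2\) is a free arbitrary constant and can be rescaled to match the form stated in the corollary.
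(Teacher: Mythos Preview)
Your proposal is correct and matches the paper's own proof, which simply says to substitute the passivity definition (\(Q=-\rho\mathbf{I}\), \(S=\tfrac{1}{2}\mathbf{I}\), \(R=-\nu\mathbf{I}\)) into \Cref{result1}. You have in fact gone further than the paper by spelling out the rescaling of \(\zeta_2\) needed to absorb the \(\tfrac{1}{4}\) coming from \(\|S_1\|_2^2\); the paper's one-line proof leaves this implicit.
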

\begin{proof}
	Use \Cref{passivity} in \Cref{result1} to obtain this result.
\end{proof}


The result in \Cref{result1} states that if condition (\ref{cond1}) is met, then the QSR dissipativity of an approximate input-output simulation of transition system \(T_1(\Sigma_1)\) implies the QSR dissipativity of \(T_1(\Sigma_1)\) itself. Note that the reverse of this result is not true in general. This can be seen from the definition of approximate input-output simulation. For every transition in \(T_1(\Sigma_1)\) there exists a corresponding approximate transition in \(T_2(\Sigma_2)\). However, \(T_2(\Sigma_2)\) can be a larger system in the sense that there might be some transitions in \(T_2(\Sigma_2)\) for which there is no corresponding transition in \(T_1(\Sigma_1)\). Therefore, from \Cref{result1}, QSR dissipativity of \(T_2(\Sigma_2)\) implies the QSR dissipativity of \(T_1(\Sigma_1)\) and not the other way around in general. In the next part of this section, we consider this reverse problem of determining the \(Q\), $S$ and $R$ dissipativity of an approximate input-output simulation from the QSR dissipativity of the original incrementally forward complete continuous system under a particular abstraction technique.

\subsection{Dissipativity of abstraction of the QSR dissipative system}
This section considers the problem of determining dissipativity matrices of the abstraction of a dissipative system. Zamani et al \cite{zamani2012} showed that the approximate simulation of incrementally forward complete systems can be computed using time and space quantization. We make a slight  modification to this procedure by introducing an extra design parameter to obtain finite abstractions which are approximately input-output similar to the original system. We then quantify the change  in \(QSR\) dissipativity of the system model under such abstraction. Since in this work the abstraction is obtained with respect to measured output of the system, it makes sense to consider two different cases, (i) when the measured output of system is same as system states and, (ii) when measured output is same as the system output. 

\subsubsection{System state as measured output}

In this section we discuss the dissipativity properties of approximate input-output simulation of sampled data systems. The particular class of systems we address here are the ones where measured and actual output of the system are different. For this purpose, we use states as the measured output of $T_\tau(\Sigma)$. However, for dissipativity analysis, we use an alternate output corresponding to $y=h(x,u)$.

There have been several approaches to obtain approximate simulation of systems. Most of them concentrate on a restrictive class of systems. \cite{zamani2012} introduced  a new procedure for construction of abstractions for any non-linear sampled data system which are incrementally forward complete. The approach discussed in \cite{zamani2012} provides sufficient conditions in terms of appropriate sampling time and quantization parameters to obtain countable transition systems guaranteeing approximate (alternating) simulation. We use this technique to construct approximate input-output (alternating) simulation for sampled data systems \(T_\tau(\Sigma)\) and analyze the dissipativity properties of thus obtained abstract system.

\begin{proposition}\label{main_corr1}
	Consider a control system $\Sigma$ in (\ref{sigma}) whose states are the measured output. Given any desired precision parameters $\epsilon_y>0$, $\epsilon_u >0$, if $\Sigma$ satisfies \Cref{incrementally forward complete} then for any $\{\tau,\theta_1,\theta_2,\eta,\mu\} >0$ satisfying
$\eta/2\leq \epsilon_y\leq \theta_1$ and $ \mu/2\leq \epsilon_u\leq \theta_2$, we have:
	\begin{equation}
	T_{\tau,\mu,\eta}(\Sigma)\preceq^{(\epsilon_u, \epsilon_y)}_{IOAS} T_{\tau}(\Sigma)\preceq^{(\epsilon_u, \epsilon_y)}_{IOS} T_{\tau,\mu,\eta}(\Sigma) 
	\end{equation}
	where $T_\tau(\Sigma)$ and $T_{\tau,\mu,\eta}(\Sigma)$ are defined in \Cref{def:discrete-time sys} and \Cref{sampled-quantized system different output} respectively.
\end{proposition}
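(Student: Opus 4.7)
The plan is to exhibit explicit distance-based simulation relations for each of the two inclusions and verify the existence, output/input matching, and transition clauses of \Cref{def:IOS} and \Cref{def:IOAS}. Since $X_q=[X]_\eta\subseteq X=X_\tau$ and both systems return the state as their measured output, I will take the symmetric relation
\[
\mathcal{R} := \{(a,b)\ :\ \|a-b\|\leq\epsilon_y\},
\]
viewed as a subset of $X_\tau\times X_q$ for the IOS direction and of $X_q\times X_\tau$ for the IOAS direction. Non-emptiness of $Post_{u_q}(x_q)$ for every $u_q\in U_q$ will follow from the same rounding construction used below.

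For $T_\tau(\Sigma)\preceq^{(\epsilon_u,\epsilon_y)}_{IOS}T_{\tau,\mu,\eta}(\Sigma)$, every $x\in X_\tau$ admits a grid point $x_q\in[X]_\eta$ with $\|x-x_q\|\leq\eta/2\leq\epsilon_y$, so the existence axiom is met. Given an arbitrary $(x,x_q)\in\mathcal{R}$ and any $u\in U(x)$, I select $u_q\in[U]_\mu$ nearest to $u$, which yields $\|u-u_q\|\leq\mu/2\leq\epsilon_u$ and the output bound $\|x-x_q\|\leq\epsilon_y$ by the definition of $\mathcal{R}$. For the transition clause, set $x'=\xi(\tau,x,u)$ and let $x_q'$ be the nearest grid point of $x'$, so $\|x'-x_q'\|\leq\eta/2\leq\epsilon_y$ and hence $(x',x_q')\in\mathcal{R}$. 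The triangle inequality together with the incremental-forward-completeness estimate \cref{beta_gamma} and the monotonicity of $\alpha_1(\cdot,\tau),\alpha_2(\cdot,\tau)$ gives
\[
\|\xi(\tau,x_q,u_q)-x_q'\| \leq \alpha_1(\|x-x_q\|,\tau)+\alpha_2(\|u-u_q\|,\tau)+\tfrac{\eta}{2} \leq \alpha_1(\theta_1,\tau)+\alpha_2(\theta_2,\tau)+\tfrac{\eta}{2},
\]
where the last step uses $\epsilon_y\leq\theta_1$ and $\mu/2\leq\epsilon_u\leq\theta_2$. By \Cref{sampled-quantized system different output} this is precisely the criterion for $x_q'$ to be a $u_q$-successor of $x_q$ in $T_{\tau,\mu,\eta}$.

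For $T_{\tau,\mu,\eta}(\Sigma)\preceq^{(\epsilon_u,\epsilon_y)}_{IOAS}T_\tau(\Sigma)$ the argument mirrors the above. The existence clause is trivial via the choice $x:=x_q\in X_\tau$; for any pair $(x_q,x)\in\mathcal{R}$ and any $u_q\in U_q(x_q)$, take $u:=u_q\in U$, whence the input and output bounds are immediate. Because $T_\tau$ is deterministic, the alternating clause reduces to matching the single successor $x'=\xi(\tau,x,u_q)$ by some $x_q'\in Post_{u_q}(x_q)$; the nearest grid point of $x'$ again supplies $\|x_q'-x'\|\leq\eta/2\leq\epsilon_y$, and the analogous triangle-inequality estimate (now with the input-drift term vanishing since $u=u_q$ so that $\alpha_2(0,\tau)=0$) together with $\alpha_1(\epsilon_y,\tau)\leq\alpha_1(\theta_1,\tau)$ certifies $x_q'\in Post_{u_q}(x_q)$.

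The only nontrivial technical step in either direction is certifying that the candidate successor $x_q'$, obtained by rounding the exact continuous trajectory $\xi(\tau,x,u)$ onto the lattice, actually lies inside the $(\alpha_1(\theta_1,\tau)+\alpha_2(\theta_2,\tau)+\eta/2)$-ball around $\xi(\tau,x_q,u_q)$ prescribed in \Cref{sampled-quantized system different output}. The slack furnished by the hypotheses $\epsilon_y\leq\theta_1$ and $\epsilon_u\leq\theta_2$ is exactly what absorbs the IFC-induced drift through \Cref{incrementally forward complete}, and the same estimate guarantees $Post_{u_q}(x_q)\neq\emptyset$ for every $u_q\in U_q$, making both candidate relations well-defined.
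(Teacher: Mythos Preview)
The proposal is correct and follows essentially the same argument as the paper: define a distance-based relation, round to the nearest grid point for inputs and successors, and use the incremental-forward-completeness bound together with the triangle inequality and the slack $\epsilon_y\leq\theta_1$, $\epsilon_u\leq\theta_2$ to certify the successor condition of \Cref{sampled-quantized system different output}. The only cosmetic difference is that the paper takes the relation radius to be $\eta/2$ for the IOS direction and $0$ (i.e., $x_\tau=x_q$) for the IOAS direction, whereas you use the single radius $\epsilon_y$ for both; your wider relation still closes under transitions for exactly the same reasons.
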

\begin{proof}
	Proof follows directly from Theorem 4.1 in \cite{zamani2012}. \highlighttext{An outline of the proof can be found in Appendix.}
\end{proof}

 We next analyze the \(QSR\) dissipativity of approximately input-output similar system \(T_{\tau,\mu,\eta}(\Sigma)\) and consider the case where measured output is same as system states.

\begin{theorem} \label{result2}
	Consider a dynamical system $\Sigma$ in (\ref{sigma}) whose states are the measured output.  Suppose $\Sigma$ satisfies Assumptions \ref{incrementally forward complete} and \ref{output rate gain bound} and it is $QSR$ dissipative with respect to the output function $y=h(x,u)$ and \highlighttext{a storage function $V(\cdot): V(x_1)-V(x_2)\leq L \|x_1-x_2\|$}. Let $T_{\tau}(\Sigma)$ be the transition system corresponding to $\Sigma$ with a sampling time $\tau$. If  the input and state quantization parameters $\mu$ and $\eta$ are chosen such that $T_{\tau, \mu, \eta}(\Sigma)$ in \cref{sampled-quantized system different output} is $(\epsilon_u,\epsilon_y)$ - approximately input-output similar to $T_{\tau}(\Sigma)$, then  $T_{\tau, \mu, \eta}(\Sigma)$ is quasi $QSR$ dissipative with matrices \(Q_{\tau, \mu, \eta}\), \(S_{\tau, \mu, \eta}\), \(R_{\tau, \mu, \eta}\) satisfying,
	\begin{gather}
	\label{thm2eq1}
	\begin{aligned}
	Q_{\tau, \mu, \eta} & \geq  Q + \tau\|Q\|_2(\tau\gamma+1)\mathbf{I} \\
	S_{\tau, \mu, \eta} & =  S \\
	R_{\tau, \mu, \eta} & \geq  R + \tau\gamma \|S\|_2\mathbf{I}+ \tau \gamma \|Q\|_2(\tau^2\gamma+\tau+\gamma)\mathbf{I}
	\end{aligned}
	\end{gather}
	
	where \(Q, S, \text{ and } R\) are the dissipativity matrices for \(\Sigma\).
\end{theorem}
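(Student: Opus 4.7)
The plan is to reduce the one-step discrete quasi-QSR inequality \cref{quasi-diss inequality discrete2} for $T_{\tau,\mu,\eta}(\Sigma)$ to the continuous QSR inequality \cref{QSRcontinuous} for $\Sigma$ integrated over a single sampling interval of length $\tau$. Fix $x_q\in X_{\textbf{q}}$, $u_q\in U_{\textbf{q}}$, and any successor $x_q'$ of $x_q$ under input $u_q$ in $T_{\tau,\mu,\eta}(\Sigma)$; let $\xi(t)=\xi(t,x_q,u_q)$, $y(t)=h(\xi(t),u_q)$, and $y_q=h(x_q,u_q)=y(0)$. Applying \cref{QSRcontinuous} on $[0,\tau]$ with constant input $u_q$ gives
\begin{equation*}
V(\xi(\tau))-V(x_q)\leq \int_0^\tau \bigl(y(t)^T Q y(t)+2y(t)^T S u_q+u_q^T R u_q\bigr)\,dt.
\end{equation*}
By \Cref{sampled-quantized system different output}, $\|\xi(\tau)-x_q'\|\leq r$ with $r:=\alpha_1(\theta_1,\tau)+\alpha_2(\theta_2,\tau)+\eta/2$, and the Lipschitz hypothesis on $V$ yields $V(x_q')-V(\xi(\tau))\leq Lr$. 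Setting $\beta:=-Lr\leq 0$, the proof reduces to exhibiting matrices $Q_{\tau,\mu,\eta}$, $S_{\tau,\mu,\eta}=S$, and $R_{\tau,\mu,\eta}$ for which the candidate one-step discrete supply $y_q^T Q_{\tau,\mu,\eta}y_q+2y_q^T S u_q+u_q^T R_{\tau,\mu,\eta}u_q$ upper-bounds the integral on the right.

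To bound that integral, decompose $y(t)=y_q+\Delta(t)$ with $\Delta(t):=\int_0^t\dot y(s)\,ds$. Expanding the quadratic form produces the leading piece $\tau(y_q^T Q y_q+2y_q^T S u_q+u_q^T R u_q)$ plus three correction terms, $2y_q^T Q\int_0^\tau\Delta\,dt$, $2\bigl(\int_0^\tau\Delta\,dt\bigr)^T S u_q$, and $\int_0^\tau\Delta^T Q\Delta\,dt$. From \Cref{output rate gain bound} and Cauchy-Schwarz one obtains $\|\Delta(t)\|_2\leq \tau\gamma\|u_q\|_2$ for every $t\in[0,\tau]$, hence $\|\int_0^\tau\Delta\,dt\|_2\leq \tau^2\gamma\|u_q\|_2$ and $\int_0^\tau\|\Delta(t)\|_2^2\,dt\leq \tau^3\gamma^2\|u_q\|_2^2$. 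Applying $|a^T M b|\leq \|M\|_2\|a\|_2\|b\|_2$ to each correction and splitting the mixed factor $\|y_q\|_2\|u_q\|_2$ via a Young-type inequality $2\|y_q\|_2\|u_q\|_2\leq \alpha\|y_q\|_2^2+\alpha^{-1}\|u_q\|_2^2$ with appropriately chosen weight $\alpha$ converts every correction into a non-negative multiple of $\|y_q\|_2^2$ or $\|u_q\|_2^2$. Absorbing the $\|y_q\|_2^2$-multiples into a diagonal excess of $Q$ and the $\|u_q\|_2^2$-multiples into a diagonal excess of $R$ yields the bounds in \cref{thm2eq1}; no correction produces a genuine $y_q^T(\cdot)u_q$ cross term after the Young splits, which is why $S_{\tau,\mu,\eta}$ remains equal to $S$.

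The main obstacle is the matrix bookkeeping in the second step: three correction terms, each carrying a specific power of $\tau$ and $\gamma$, must be distributed between the $\|y_q\|_2^2$- and $\|u_q\|_2^2$-diagonals so that the resulting excess matrices take the particular forms $\tau\|Q\|_2(\tau\gamma+1)\mathbf{I}$ and $\tau\gamma\|S\|_2\mathbf{I}+\tau\gamma\|Q\|_2(\tau^2\gamma+\tau+\gamma)\mathbf{I}$ claimed in the statement. This requirement pins down the Young weights (in contrast with \Cref{result1}, where arbitrary positive constants $\zeta_i$ remain free); once the weights are fixed, the quasi-dissipativity offset $\beta=-Lr$ is read off directly from the Lipschitz step, and quantifies the energy generated by the spatial quantization.
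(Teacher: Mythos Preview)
Your proposal is correct and follows essentially the same route as the paper: integrate the continuous QSR inequality over a single sampling interval with initial state $x_q$ and constant input $u_q$, replace $V(\xi(\tau))$ by $V(x_q')$ via the Lipschitz hypothesis together with the transition bound from \Cref{sampled-quantized system different output}, and then convert the time integral into a sampled quadratic using Cauchy--Schwarz and Young's inequality driven by \Cref{output rate gain bound}. The paper organises the $Q$-term bound a little differently---it writes $y(t)^TQy(t)-y(k\tau)^TQy(k\tau)=\int_{k\tau}^{t}\frac{d}{ds}\bigl(y(s)^TQy(s)\bigr)ds$ and first bounds this in terms of $\int_{k\tau}^{(k+1)\tau}\|y(t)\|_2^2\,dt$, which is then estimated in a second step, rather than your direct expansion $y=y_q+\Delta$---but the underlying estimates are the same. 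One clarification on your last paragraph: you do not need to choose the Young weights so as to reproduce the \emph{exact} coefficients in \cref{thm2eq1}; those inequalities state sufficient conditions, so any tighter diagonal excess you obtain (and your direct decomposition does give a slightly tighter $R$-excess) still establishes the theorem. The storage function for $T_{\tau,\mu,\eta}(\Sigma)$ is $V/\tau$, which absorbs the leading $\tau$ factor produced by integrating over $[0,\tau]$; this is implicit in the paper and in your write-up.
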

\begin{proof}
	See Appendix.
\end{proof}

\subsubsection{System output as measured output}
In the last section, we provided results for the dissipativity properties of approximate input-output simulation for the class of systems where measured output is same as the system states. We now extend these results to the systems where measured output is same as the actual system output. To do this, we make an additional assumption on the system output to be Lipschitz continuous which means that output can not change abruptly. The difference from previous section is that system output \(y=h(x,u)\) is also sampled and quantized here. This can be useful for design of systems with output feedback. 

\begin{proposition}\label{main_corr2}
	Consider a control system $\Sigma$ in (\ref{sigma}) whose measured output is the same as system output. Given any desired precision parameters $\epsilon_y>0$, $\epsilon_u >0$, if $\Sigma$ satisfies \Cref{incrementally forward complete} and the output function is Lipschitz continuous, i.e., \(\|h(x_1,u_1)-h(x_2,u_2)\|\leq K_1 \|x_1-x_2\|+K_2\|u_1-u_2\|\), then for any $\{\tau, \theta_1,\theta_2,\eta,\mu\} >0$ satisfying \highlighttext{$K_1\eta/2+(K_2+1)\mu/2 \leq \epsilon_y$}, $\eta/2\leq \theta_1$ and $ \mu/2\leq \epsilon_u\leq \theta_2$, we have:
	\begin{equation}
	T_{\tau,\mu,\eta}(\Sigma)\preceq^{(\epsilon_u, \epsilon_y)}_{IOAS} T_{\tau}(\Sigma)\preceq^{(\epsilon_u, \epsilon_y)}_{IOS} T_{\tau,\mu,\eta}(\Sigma)
	\end{equation}
    where $T_\tau(\Sigma)$ and $T_{\tau,\mu,\eta}(\Sigma)$ are defined in \Cref{def:discrete-time sys} and \Cref{sampled-quantized system same output} respectively.
\end{proposition}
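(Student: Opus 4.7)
The plan is to mirror the proof of \Cref{main_corr1}, which itself follows Theorem 4.1 of \cite{zamani2012}, and extend it by accounting for the output quantization error in addition to the state and input errors. I would introduce the single relation $\mathcal{R} = \{(x_\tau,x_q) \in X_\tau \times X_q : \|x_\tau - x_q\| \leq \eta/2\}$, use it to witness $T_\tau(\Sigma) \preceq^{(\epsilon_u,\epsilon_y)}_{IOS} T_{\tau,\mu,\eta}(\Sigma)$, and use $\mathcal{R}^{-1}$ to witness $T_{\tau,\mu,\eta}(\Sigma) \preceq^{(\epsilon_u,\epsilon_y)}_{IOAS} T_\tau(\Sigma)$. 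Since $[X]_\eta$ is an $\eta$-spaced grid, every $x \in X$ has a grid point within $\eta/2$, so the ``for all $x_1$ there exists $x_2$'' preconditions in \Cref{def:IOS} and \Cref{def:IOAS} are met.

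Next I would verify the transition condition. For the IOS direction, given $(x_\tau, x_q) \in \mathcal{R}$ and an input $u_\tau$, I choose $u_q \in [U]_\mu$ nearest to $u_\tau$, so $\|u_\tau - u_q\| \leq \mu/2 \leq \epsilon_u$; for the successor, I pick $x_q'$ nearest to $\xi(\tau,x_\tau,u_\tau)$ in $[X]_\eta$, giving $(x_\tau',x_q') \in \mathcal{R}$ automatically, and the transition relation in \Cref{sampled-quantized system same output} holds by the triangle inequality combined with the incremental forward completeness bound \eqref{beta_gamma}, using the hypotheses $\eta/2 \leq \theta_1$ and $\mu/2 \leq \theta_2$. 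The IOAS direction is slightly simpler because $[U]_\mu \subseteq U$ permits the choice $u_\tau = u_q$, eliminating the input-error contribution.

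The new ingredient compared to \Cref{main_corr1} is the verification of the output precision $\|\mathcal{H}_{m_\tau}(x_\tau,u_\tau) - \mathcal{H}_{m_q}(x_q,u_q)\| \leq \epsilon_y$. By the triangle inequality,
\[
\|h(x_\tau,u_\tau) - h_q(x_q,u_q)\| \leq \|h(x_\tau,u_\tau) - h(x_q,u_q)\| + \|h(x_q,u_q) - h_q(x_q,u_q)\|.
\]
The first term is bounded by $K_1\|x_\tau - x_q\| + K_2\|u_\tau - u_q\| \leq K_1\eta/2 + K_2\mu/2$ by the Lipschitz hypothesis on $h$; the second term is bounded by $\mu/2$ from the defining property of $h_q$ in \Cref{sampled-quantized system same output}. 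Summing yields $K_1\eta/2 + (K_2+1)\mu/2$, which is at most $\epsilon_y$ by assumption, closing the estimate.

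I do not anticipate serious technical obstacles, since the parameter conditions in the statement have been chosen precisely to make the triangle inequality tight. The main subtlety that needs care is the interpretation of the bound on $h_q$, stated in \Cref{sampled-quantized system same output} with respect to the constant input signal $u : [0,\tau) \to u_q$; once this is read as a pointwise statement at $t=0$ so that $h(x_q,u)$ collapses to $h(x_q,u_q)$, the output estimate combines cleanly with the state and input estimates, and the proof proceeds as a direct extension of \Cref{main_corr1}.
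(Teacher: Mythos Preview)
Your proposal is correct and follows essentially the same route as the paper: the same relation $\mathcal{R}=\{(x_\tau,x_q):\|x_\tau-x_q\|\le\eta/2\}$, the same triangle-inequality splitting $\|h(x_\tau,u_\tau)-h_q(x_q,u_q)\|\le K_1\eta/2+K_2\mu/2+\mu/2\le\epsilon_y$, and the same appeal to the proof of \Cref{main_corr1} for the transition condition. The only cosmetic difference is that for the IOAS direction the paper uses the tighter relation $\{(x_q,x_\tau):x_q=x_\tau\}$ rather than $\mathcal{R}^{-1}$; either choice works once one sets $u_\tau=u_q$.
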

\begin{proof}
	Proof follows directly from Theorem 4.1 in \cite{zamani2012}. \highlighttext{An outline of the proof can be found in Appendix.} 
\end{proof}

\begin{theorem}\label{result3}
	Consider a dynamical system $\Sigma$ in (\ref{sigma}) whose measured output is the same as system output.  Suppose $\Sigma$ satisfies Assumptions \ref{incrementally forward complete} and \ref{output rate gain bound} and it is $QSR$ dissipative system with respect to the output function $y=h(x,u)$ and \highlighttext{a storage function $V(\cdot): V(x_1)-V(x_2)\leq L \|x_1-x_2\|$}.
	Let $T_{\tau}(\Sigma)$ be the transition system corresponding to $\Sigma$ with a sampling time $\tau$. If  the input and state quantization parameters $\mu$ and $\eta$ are chosen as per \Cref{main_corr2} so that $T_{\tau, \mu, \eta}(\Sigma)$ in \Cref{sampled-quantized system same output} is $(\epsilon_u,\epsilon_y)$ - approximately input-output similar to $T_{\tau}(\Sigma)$, then  $T_{\tau, \mu, \eta}(\Sigma)$ is quasi $QSR$ dissipative with matrices \(Q_{\tau, \mu, \eta}\), \(S_{\tau, \mu, \eta}\), \(R_{\tau, \mu, \eta}\) satisfying,
	\begin{gather}
	\label{thm3eq1}
	\begin{aligned}
    {Q}_{\tau, \mu, \eta} & \geq  Q+\tau\|Q\|_2(\tau\gamma+1)\mathbf{I}  + (\|Q+\tau\|Q\|_2(\tau\gamma+1)\mathbf{I}\|^2_2)\mathbf{I} \\
	{S}_{\tau, \mu, \eta} & =  S \\
	{R}_{\tau, \mu, \eta} & \geq  R+ \tau\gamma\|Q\|_2(\tau^2\gamma+\tau+\gamma)\mathbf{I} +\|S\|^2_2\mathbf{I} + (\gamma\sqrt{m\tau}\mu +\gamma^2\tau)\mathbf{I}
	\end{aligned}
	\end{gather}
	where \(Q, S, \text{ and } R\) are the dissipativity matrices for \(\Sigma\).
\end{theorem}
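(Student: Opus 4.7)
The plan is to adapt the proof of \Cref{result2} while carefully tracking the additional error introduced by the output quantization $\|h_q(x_q,u_q) - h(x_q,u)\|\le \mu/2$ that is present in \Cref{sampled-quantized system same output} but absent in \Cref{sampled-quantized system different output}. I would start by integrating the continuous-time $QSR$ inequality for $\Sigma$ over one sampling interval $[k\tau,(k+1)\tau]$ along the continuous trajectory $\xi(\cdot)$ initiated at $x_q(k)$ under the zero-order-hold input $u(t)\equiv u_q(k)$, obtaining
\[
\int_{k\tau}^{(k+1)\tau}\!\bigl(y^T Q y + 2 y^T S u_q + u_q^T R u_q\bigr)\,dt \;\ge\; V(\xi((k+1)\tau)) - V(x_q(k)).
\]
The Lipschitz storage function $V$ together with the transition rule $\|\xi(\tau,x_q(k),u_q(k))-x_q(k+1)\|\le \alpha_1(\theta_1,\tau)+\alpha_2(\theta_2,\tau)+\eta/2$ lets me replace $V(\xi((k+1)\tau))$ by $V(x_q(k+1))$ modulo a bounded additive constant which is later absorbed into $\beta$, giving the quasi-dissipative form required in \Cref{def:quasi-dissipative}.

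Next I would split the continuous output into $y(t) = y_q(k) + (y_0 - y_q(k)) + (y(t)-y_0)$, where $y_0 := h(x_q(k),u_q(k))$. The first gap satisfies $\|y_0 - y_q(k)\|\le \mu/2$ by the definition of $\mathcal{H}_{m_q}$ (so $\|y_0-y_q(k)\|_2 \le \sqrt{m}\mu/2$), while the second gap satisfies $\int_{k\tau}^{(k+1)\tau}\|y(t)-y_0\|_2^2\,dt \le \tfrac{\tau^2}{2}\cdot \tau\gamma^2\|u_q\|_2^2$ by Cauchy--Schwarz applied to $y(t)-y_0 = \int_{k\tau}^t \dot{y}(s)\,ds$ together with \Cref{output rate gain bound}. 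Expanding the quadratic $y^T Q y$ and bilinear $y^T S u_q$ terms around $y_q(k)$ then produces $\tau y_q^T Q y_q + 2\tau y_q^T S u_q + \tau u_q^T R u_q$ plus cross and residual terms involving $y_0-y_q$ and $y-y_0$.

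I would then repeatedly apply Young's inequality of the form $2 a^T M b \le \lambda \|a\|_2^2 + \lambda^{-1}\|M\|_2^2 \|b\|_2^2$ to convert every cross term into squared contributions in $y_q$ and $u_q$. The $y_q$-quadratic residuals absorb $\tau\|Q\|_2(\tau\gamma+1)\mathbf{I}$ into $Q$ exactly as in \Cref{result2}, and a further Young's splitting needed to decouple $y_q$ from the output-quantization residual $y_0-y_q$ yields the extra $\|Q+\tau\|Q\|_2(\tau\gamma+1)\mathbf{I}\|_2^2\mathbf{I}$ bump appearing in $Q_{\tau,\mu,\eta}$. The same two-step splitting on the bilinear term replaces the $\tau\gamma\|S\|_2\mathbf{I}$ of \Cref{result2} by $\|S\|_2^2\mathbf{I}$ inside $R_{\tau,\mu,\eta}$, while the output-variation bound still contributes $\tau\gamma\|Q\|_2(\tau^2\gamma+\tau+\gamma)\mathbf{I}$. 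The new pure output-quantization term, obtained by pairing the $\sqrt{m}\mu/2$ factor against an $\|u_q\|_2$-proportional bound via Cauchy--Schwarz, furnishes the additional $(\gamma\sqrt{m\tau}\mu+\gamma^2\tau)\mathbf{I}$ increment. The remaining pure constants, including $L(\alpha_1(\theta_1,\tau)+\alpha_2(\theta_2,\tau)+\eta/2)$ from the storage-function step and the $\|Q\|_2\cdot m\mu^2$-type residual from $\|y_0-y_q\|_2^2$, are lumped into $\beta$.

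The main obstacle will be the bookkeeping of the Young's-inequality coefficients. Since \Cref{result3} introduces a \emph{coupled} error channel where the $\mu/2$ output-quantization gap enters multiplicatively with both $Q$ and $S$, the coefficient choices used in \Cref{result2} cannot be reused verbatim; obtaining the specific $\|\cdot\|_2^2$ structure appearing in $Q_{\tau,\mu,\eta}$ and $R_{\tau,\mu,\eta}$ requires selecting the Young's parameters so that the intermediate matrix $Q + \tau\|Q\|_2(\tau\gamma+1)\mathbf{I}$ itself appears inside the squared-norm bound, rather than a looser majorant, and so that every $y_q$- or $u_q$-dependent residual is strictly absorbed into the advertised dissipativity matrices, leaving only a nonpositive constant $\beta$.
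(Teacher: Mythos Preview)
Your proposal is correct and follows essentially the same approach as the paper's own proof: start from the continuous $QSR$ inequality over one sample interval with constant input $u_q$ and initial state $x_q$, replace $V(\xi(\tau,x_q,u_q))$ by $V(x_q')$ via the Lipschitz bound, reuse the \Cref{result2} estimate to pass from $y(t)$ to the unquantized sampled output $y_0=h(x_q,u_q)$, then write $y_0=\hat{y}(k\tau)+\Delta y$ with $\|\Delta y\|\le\mu/2$ and apply Young-type splittings so that the intermediate matrix $Q+\tau\|Q\|_2(\tau\gamma+1)\mathbf{I}$ is exactly what appears inside the squared norm; the $S$ term is handled analogously, yielding the $\|S\|_2^2$ and $(\gamma\sqrt{m\tau}\mu+\gamma^2\tau)$ contributions from bounding $\int\|y(t)-\hat{y}(k\tau)\|_2^2\,dt$, with all input-independent residuals absorbed into $\beta$.
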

\begin{proof}
	See Appendix.
\end{proof}


\begin{remark}\normalfont
	\begin{enumerate}
		\item The inequalities in (\ref{cond1}) restrict only the spectral radius of the dissipation matrices. This gives some flexibility in choosing the actual structure of the \(Q\) and \(R\) matrices.
		\item Since passivity is a special case of \(QSR\) dissipativity, results guaranteeing the passivity levels for abstract system can be derived from Theorems \ref{result2} and \ref{result3} in a straight forward manner.
		\item It should be noted that the notion of approximate input-output simulation here, under mild conditions, is composition preserving \cite{tazaki-hscc-2008} in that the interconnection of two abstractions is the same as the abstraction of interconnection of two systems. Also, passivity is preserved over feedback and parallel interconnections \cite{willems}. Therefore, the composition of passivity of abstractions is the same as the passivity of abstraction of composition. This is an interesting result that can be used to reason about the passivity properties of abstractions of large scale systems.
		\item Symbolic models obtained using abstractions have been used for design and analysis of control systems \cite{girard3}. However, issues like stability are in general difficult to address using symbolic models. Theorems \ref{result2} and \ref{result3} can be used to infer dissipativity properties of symbolic models obtained using approximate input-output simulation based abstractions. Once the system design is complete, the dissipativity properties of symbolic model can be translated back to that of the original system using the discrete version of results in Theorem \ref{result1}. Since, QSR dissipativity implies system stability under certain conditions, apart from offering compositionality, these results on QSR dissipativity can be used to guarantee stability of system abstractions as well.
	\end{enumerate}
\end{remark}

\section{Examples}
\label{example}

In this section, we present two illustrative examples. We describe two simple linear time invariant (LTI) systems and discuss how the ideas in this paper can be used to obtain a symbolic model and carry out the dissipativity analysis. For both these examples, we consider input feed-forward output feedback passivity which is a special case of \(QSR\) - dissipativity.

\noindent\textbf{Example 1.} Consider an LTI system $ \Sigma:\dot{x}=-x+u$, with the measured output described by an identity map, i.e., $y_m=x$. For dissipativity analysis, we consider another output function $y = h(x,u) = Cx+Du = x+u$. It can be verified that this system  is input feed-forward output feedback passive with respect to an output function $y=x+u$ and \(V(x) = \frac{1}{2}x^TPx=\frac{1}{2}x^T(0.5154)x\). The passivity index are $(0.25,0.5)$.

Now we construct an approximately input-output similar symbolic model for $\Sigma$. Based on the discussion on incremental forward completeness of linear systems in Section V of \cite{zamani2012}, it is readily seen that $\Sigma$ is  incrementally forward complete, thus we can apply \Cref{main_corr1}. We work on the subset $X=[-0.2, 0.2]$ of the state space and subset $U=[-0.1,0.1]$ of the input space. To construct the symbolic model  $T_{\tau,\mu, \eta}(\Sigma)$ of precision $\epsilon_u=0.1,\epsilon_y=1$, choose $\theta_1=1$, $\eta=0.1$, $\theta_2=\epsilon_u=\mu=0.1$ and $\tau=0.2$ so that the conditions in \Cref{main_corr1} are satisfied. Since $\mu=0.1$ and $\tau=0.2$, the control input is piecewise constant signal of duration $\tau$ such that 
$$\{-\mu, 0,\mu \}=\{u_{-1},u_0,u_1\}=\{-0.1,0,0.1\} \in U_q,$$
and the states of the symbolic system are described by 
$$\{-2\eta, -\eta, 0, \eta, 2\eta \}=\{-0.2, -0.1,0,0.1,0.2\} \in X_q.$$

 \begin{figure*}[htp]
 \vspace{-1em}
 	\centering
 	\begin{tikzpicture}[->,>=stealth',shorten >=1pt,auto,node distance=2.5cm,
 	semithick]
 	\tikzstyle{every state}=[fill=gray,draw=none,text=white]
 	
 	\node[state] (A)                    {$0$};
 	\node[state]         (B) [ right of=A] {$\eta$};
 	\node[state]         (C) [ right of=B] {$2\eta$};
 	\node[state]         (D) [left of=A]    {$-\eta$};
 	\node[state]         (E) [left of=D]     {$-2\eta$};
 	
 	\path (E) edge              node {$u_{-1}, u_0, u_1$} (D)
 	edge        [loop above] node {$u_{-1},u_0,u_1$} (E)
 	(D) edge    [loop above] node {$u_{-1},u_0, u_1$} (D)
 	edge              node {$u_0,u_1$} (A)
 	(A) edge  [loop above] node {$u_{-1}, u_0, u_1$} (A)
 	edge   [bend left] node {$u_{-1}$} (D)
 	edge  [bend left] node {$u_1$}(B)
 	(B) edge                 node {$u_{-1}, u_0$}(A)
 	edge   [loop above] node {$u_{-1}, u_0,u_1$} (B)
 	(C) edge                 node {$u_{-1},u_0,u_1$} (B)	
 	edge  [loop above] node {$u_{-1},u_0,u_1$} (C);
 	
 	\end{tikzpicture}
 	\caption{Symbolic model for $\Sigma$.}
 	\label{fig:state_machine}
    \vspace{-0.5em}
 \end{figure*}
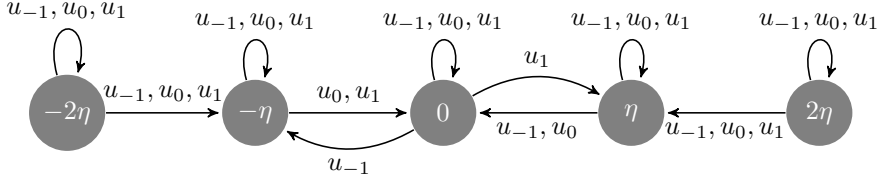

The transitions between states upon the action of a control input can be calculated using the differential equation describing $\Sigma$.  As seen in \Cref{fig:state_machine}, the symbolic model $T_{\tau,\mu, \eta}(\Sigma)$ is non-deterministic. 

The effect of  symbolic abstraction on the passivity properties of $\Sigma$ using \Cref{result2}. It can be verified that the output $y=x+u$ satisfies \Cref{output rate gain bound} for $\gamma=1$. Hence, using \Cref{result2} we can state that $T_{\tau, \eta,\mu}(\Sigma)$ is  $\left(\rho_{\tau, \mu, \eta}, \nu_{\tau, \mu, \eta}\right)$- input feedforward output feedback quasi passive where
\begin{align}
\rho_{\tau, \mu, \eta} & =\rho - \tau\|\rho\|_2(\tau\gamma+1) =0.19 \\
\nu_{\tau, \mu, \eta} & = \nu - \tau\gamma \|0.5\|_2- \tau \gamma \|\rho\|_2(\tau^2\gamma+\tau+\gamma) = 0.338
\end{align}

For the symbolic transition system, \Cref{result2}  can be alternatively verified by checking if 
$${\bm \ell}^T o-\rho_{\tau, \mu, \eta} \mathbf{o}^T\mathbf{o}- \nu_{\tau, \mu, \eta}{\bm \ell}^T{\bm \ell} \geq \hat{V}(\mathbf{p})-\hat{V}(\mathbf{q})-\beta$$
 is satisfied for all  transitions  $\mathbf{q} \xrightarrow [ \quad \tau  \quad]{{\bm \ell}} \mathbf{p}$ where $\mathbf{p} \in \textbf{Post}_{{\bm \ell}}(\mathbf{q})$, $\mathbf{o}=C\mathbf{q}+D{\bm \ell}$, $\hat{V}(\mathbf{q})=\frac{1}{2\tau}\mathbf{q}^TP\mathbf{q}$ and $\beta = \frac{L\eta}{2\tau}$, i.e., 
 $\mathbf{q}^TF\mathbf{q}+{\bm \ell}^TG\mathbf{q}+\mathbf{q}^TG^T{\bm \ell}+{\bm \ell}^TH{\bm \ell}+\tau\beta-\frac{1}{2}\mathbf{p}^TP\mathbf{p}\geq 0 $
where
$ F = \frac{1}{2}P-\rho_{\tau, \mu, \eta}\tau C^TC,\ G=\frac{\tau}{2} C-\rho_{\tau, \mu, \eta}\tau D^TC,\
H = \frac{\tau}{2}(D+D^T)-\rho_{\tau, \mu, \eta} \tau D^TD-\nu_{\tau, \mu, \eta} \tau I.$
Candidate values for $\hat{V}(\cdot)$ and $\beta$ are obtained from the proof of \Cref{result2}.

 \noindent For the symbolic system, we assume that there are $M$  quantized inputs denoted by $\{{\bm \ell}_1,{\bm \ell}_2, \dots, {\bm \ell}_M\}$ and there are $N$  quantized states denoted by $\{\mathbf{q}_1,\mathbf{q}_2,\dots,\mathbf{q}_N\}$. 
All the transitions in the symbolic system can be represented by $\mathbf{q}_i \xrightarrow [ \quad \tau \quad]{{\bm \ell}_j} \mathbf{p}^{j}_{i}$ for $i=1, \dots N$ and $j=1, \dots, M,$
where $\mathbf{p}_i^j$ represents the next state after time $\tau$ with an initial state $\mathbf{q}_i$, under the action input ${\bm \ell}_j$.  Hence, passivity verification would entail  verification of the inequality
\begin{eqnarray}\label{passive_scalar_ineq}
\mathbf{q}_i^TF\mathbf{q}_i+{\bm \ell}_j^TG\mathbf{q}_i+\mathbf{q}_i^TG^T{\bm \ell}_j+{\bm \ell}_j^TH{\bm \ell}_j+\tau\beta-\frac{1}{2}(\mathbf{p}_i^j)^TP(\mathbf{p}_i^j)\geq 0
\end{eqnarray}
\noindent for $i=1, \dots N$ and $j=1, \dots, M.$  In order to verify the above inequality for all transitions in a systematic fashion, we
let $\bar{\mathbf{q}}=\begin{bmatrix}\mathbf{q}_1, \cdots, \mathbf{q}_N\end{bmatrix}^T$, $\bar{{\bm \ell}}=\begin{bmatrix}{\bm \ell}_1, \cdots,{\bm \ell}_M\end{bmatrix}^T$ and arrange vectors $\bar{\mathbf{p}}^1=\begin{bmatrix}\mathbf{p}_1^1, \cdots,\mathbf{p}_N^1\end{bmatrix}^T, \dots, \bar{\mathbf{p}}^M=\begin{bmatrix}\mathbf{p}_1^M, \cdots,\mathbf{p}_N^M\end{bmatrix}^T $ together as $\bar{\bar{\mathbf{p}}}=\begin{bmatrix}\bar{\mathbf{p}}^1, \cdots, \bar{\mathbf{p}}^M\end{bmatrix}^T$.
Verification of  \eqref{passive_scalar_ineq}  $\text{for }i=1, \dots N\text{ and }j=1, \dots, M$  would require us to verify positivity of $MN$ scalars. All these $MN$ scalars will be arranged along the diagonal of an $MN \times MN$  matrix, and this diagonal matrix would be checked for its positive definiteness. This approach allows us to represent all the inequalities together in a compact fashion. This compact representation will be achieved using the Kronecker product as given by 
\begin{small}
\begin{eqnarray}\label{passive_kron}
PASSIVE&=&I_M\otimes ((I_N\otimes \bar{\mathbf{q}}^T)(I_N\otimes F)(I_N\otimes \bar{\mathbf{q}})) + ((I_N\otimes \bar{{\bm \ell}}^T)(I_N\otimes G)(I_N\otimes \bar{\mathbf{q}}))\otimes I_M \nonumber\\
&+&((I_N\otimes \bar{\mathbf{q}}^T)(I_N\otimes G^T)(I_N\otimes \bar{{\bm \ell}}))\otimes I_M +((I_M\otimes \bar{{\bm \ell}}^T)(I_M\otimes H)(I_M\otimes \bar{{\bm \ell}}))\otimes I_N \nonumber\\&+&I_{MN} \otimes \frac{K\epsilon_y}{\tau}-\bar{\bar{\mathbf{p}}}^T(I_{MN}\otimes P)\bar{\bar{\mathbf{p}}} \geq 0
\end{eqnarray}
\end{small}
\noindent For the nondeterministic  cases where $\bar{\bar{\mathbf{p}}}$ is not unique, we verify \eqref{passive_kron} for all possible values of $\bar{\bar{\mathbf{p}}}$ . Performing this test for our numerical example, we obtain the diagonal elements of the PASSIVE matrix for two possible values of $\bar{\bar{\mathbf{p}}}$ and it can verified that all the diagonal elements are positive, hence confirming the passivity of the symbolic model.

This example demonstrates that the results in this paper can be used to avoid large computations for determining passivity of discrete abstractions of continuous systems. 

\noindent \textbf{Example 2.} In this example we use the results of \Cref{result3} to validate the passivity performance of a plant connected with a controller implemented in software. Consider a linear time invariant system,
\begin{equation}
\label{example_system2}
\dot{x}=Ax+Bu, \qquad
y = Cx+Du
\end{equation}
\begin{align}
A = \begin{bmatrix}
-3.6 & 0.2 & 2.4 & 0 & 0\\
0.2 & -1 & 0 & -0.6 & 0\\
2.4 & 0 & -6 & -4 & 1\\
0 & -0.6 & -4 & -6 & -0.8\\
0 & 0 & 1 & 0.8 & -2
\end{bmatrix},\quad
B = \begin{bmatrix}
0.1 \\ 0.4 \\ 0.1 \\ 0.5 \\ 0.1
\end{bmatrix},\quad
C  = B^T,\quad
D = \begin{bmatrix}
0.2
\end{bmatrix}. \nonumber
\end{align}
It can be verified that the system in (\ref{example_system2}) is passive with passivity indices $(0.15,0.7)$. The $\mathcal{L}_2$ gain that bounds the rate of change of output $y$ is $\gamma=CB=0.44$. Suppose this system is connected in feedback to a passive LTI controller
\begin{equation}
\dot{z}=A_cz+B_cw, \qquad
v = C_cz+D_cw
\end{equation}
\begin{align}
A_c=\begin{bmatrix}
-2 & -1\\
-3 & -5
\end{bmatrix}, \quad
B_c = \begin{bmatrix}
0.1 \\
0.2
\end{bmatrix}, \quad
C_c  = \begin{bmatrix}
1 & 1
\end{bmatrix}, \quad D_c = \begin{bmatrix}
1
\end{bmatrix} \nonumber
\end{align}
implemented in software. For a sampling time of 0.2 sec and the state, input and output quantization value of 0.1, the passivity indices of this controller are $(\rho_c,\nu_c)=(0.0420,0.8115)$. For the ease of analysis, we focus on the subset $[-0.2,0.2]$, $[-0.1,0.1]$ and $[-0.1,0.1]$ of the state, input and output space respectively. The controller symbolic model can be obtained 
by considering piecewise continuous control inputs $\{u_{-1},u_0,u_1\}=\{-0.1,0,0.1\}$ and the symbolic states and outputs described by $\{-x_2, -x_1, x_0, x_1, x_2 \}=\{-0.2, -0.1,0,0.1,0.2\}$ and $\{y_{-1},y_0,y_1\}=\{-0.1,0,0.1\}$ respectively.

As shown in \Cref{fig:sys_soft}, the system (\ref{example_system2}) is interacting with the software controller through continuous to discrete and discrete to continuous conversion units. This system can be analyzed by considering a discrete abstraction of the continuous plant. Using \cref{result3} for passivity,
\begin{align}
\rho_{\tau, \mu, \eta} = \rho - \tau\rho(\tau\gamma+1)-|1-\rho+\tau\rho(\tau\gamma+1)| = -0.7653\nonumber \\
\nu_{\tau, \mu, \eta} = \nu -1/2 - \tau\gamma\rho(\tau^2\gamma+\tau+\gamma) - (\gamma\sqrt{m\tau}\mu+\gamma^2\tau)= 0.1329 \nonumber
\end{align}

Clearly, this discrete abstraction of the plant is no longer output feedback passive ($\rho_{\tau, \mu, \eta}<0$). However, following Theorem 7 in \cite{zhu}, since \(\nu_{\tau, \mu, \eta} > 0\), \(\rho_c >0\) and \(\rho_{\tau, \mu, \eta}+\nu_c > 0\), the closed loop system is passive with output passivity index of 0.0462.
\section{Dissipativity of approximate feedback composition of systems}
\label{composition}
In this section we discuss the dissipativity property of the approximate feedback composition of two transition systems as described in \cite{tabuada2009}. We show that once two transition systems are approximately feedback composable, then QSR dissipativity of one of those transition systems implies QSR dissipativity of the entire composition. 

Cyber physical systems can be constructed by interconnecting  several individual subsystems and this process for transition systems can be described using composition operations. It was shown in \cite{tabuada2009} that approximate feedback composition of two transition systems can also be used to construct controllers for requirements such as safety and reachability.  Approximate feedback composition of two transition systems is possible for state feedback if there exists an approximate alternating simulation relation between the two systems that may be a plant and a controller. The idea of supervisory control in \cite{tabuada2009} is that the controller restricts the behavior of the plant by forcing it to simulate the controller.
The controller selects an allowable input label, the plant
makes any transition having that input label, and the controller makes a transition to maintain alternating simulation relation. This set up works if there is room for the controller to select its input to properly navigate the plant behavior while maintaining the approximate alternating simulation relation. In the original concept of approximate alternating simulation in \cite{tabuada2009}, the input sets of two systems which are approximately alternating similar were different. 
This freedom of nonidentical inputs of two transition systems along with the other conditions of approximate alternating simulation defined in \cite{tabuada2009} are captured by \((\epsilon_u,\epsilon_y)\) approximately input-output alternating simulation in definition 4, with \(\epsilon_u \neq 0\). Therefore, we can modify the definition of approximately feedback composable systems in \cite{tabuada2009} from using approximate alternating similar systems to \((\epsilon_u,\epsilon_y)\) approximate input-output alternating similar systems. This will be clear in the following definition.

\begin{definition} A transition system $T_2$ is said to be $(\epsilon_u,\epsilon_y)$-approximate feedback composable with system $T_1$ if there exists an $(\epsilon_u,\epsilon_y)$-approximate input-output alternating simulation relation $\mathcal{R}$ from $T_2$ to $T_1$, that is, \(T_2\preceq^{(\epsilon_u, \epsilon_y)}_{IOAS} T_1\).
\end{definition}


Let $ T_{i}:=(X_i,U_i, \xrightarrow [\quad \tau \quad]{},Y_i, \mathcal{H}_i)$, \(i = \{1,2\}\) be two transition systems with a common time period $\tau$ and common input   and output sets equipped with euclidean norm as the metric.  Let $\mathcal{R}$ be a $(\epsilon_u,\epsilon_y)$ - approximate input-output alternating simulation relation from $T_2$ to $T_1$. Let us define a feedback relation $\mathcal{F}\subset X_1\times X_2 \times U_1 \times U_2$ defined by all the quadruples $(x_1,x_2,u_1,u_2)\in X_1 \times X_2 \times U_1 \times U_2$ for which $(x_1,x_2)\in \mathcal{R}$ and conditions 1 and 2 in \Cref{def:IOAS} are met.

The feedback composition of $T_2$ and $T_1$ with interconnection relation $\mathcal{F}$, denoted
by $T_2 \times_{\mathcal{F}}^{(\epsilon_u,\epsilon_y)}T_1$, is the transition  system   $(X_{12}, U_{12}, \xrightarrow [ \quad \tau \quad]{} , Y_{12}, \mathcal{H}_{12})$ consisting of
\begin{itemize}
	\item $X_{12}=\{({x}_1, {x}_2) \in (X_1 \times X_2)\,\, \vline \,\, \mathbf{d_Y}(\mathcal{H}_1({x}_1,u_1), \mathcal{H}_2({x}_2,u_2))\leq \epsilon_y\}$, where $\mathbf{d_U}(u_1,u_2)\leq \epsilon_u$;
	\item $U_{12}=\{(u_1,u_2)|\mathbf{d_U}(u_1,u_2)\leq \epsilon_u, u_1\in U_1, u_2 \in U_2\}$;
	\item $({x}_1,{x}_2)  \xrightarrow [ \quad \tau \quad]{({ u_1},{ u_2})} ({x}_1',{x}_2')$ if the following three conditions hold:
	\begin{enumerate}
		\item ${x}_1 \xrightarrow [ \quad \tau \quad]{u_1} {x}_1'$ in $T_1$;
		\item ${x}_2 \xrightarrow [ \quad \tau \quad]{u_2} {x}_2'$ in $T_2$;
		\item $({x}_1,{x}_2,{ u_1},{u_2}) \in \mathcal{F} $; 
	\end{enumerate}
	\item $Y_{12}=Y_1= Y_2$;
	\item $\mathcal{H}_{12}(x_1, x_2,u_1,u_2) = \frac{1}{2}(\mathcal{H}_1(x_1,u_1)+ \mathcal{H}_2(x_2,u_2))$.
\end{itemize} 

This symmetrical choice of output allows $T_2 \times_{\mathcal{F}}^{(\epsilon_u,\epsilon_y)}T_1$ to be commutative. However, we can also choose an output for the composition as $\mathcal{H}_{12}(x_1, x_2,u_1,u_2) = \mathcal{H}_1(x_1,u_1)$ or $\mathcal{H}_{12}(x_1, x_2,u_1,u_2) = \mathcal{H}_2(x_2,u_2)$.\newline 

Before analyzing the dissipativity of the feedback composition, we present the following result from \cite{tabuada2009}. Even though  the results in \cite{tabuada2009} were derived for approximate simulation relationships   they also hold true for  approximate input-output simulation relationships.

\begin{proposition} \label{prop1} Consider approximate feedback composition of two \((\epsilon_u,\epsilon_y)\) approximately input-output similar transition systems $T_1$ and $T_2$, where \(T_2\preceq^{(\epsilon_u, \epsilon_y)}_{IOAS} T_1\). If we define the output of the composition as 
	\begin{enumerate}
		\item $ h_{12}({x}_1, {x}_2,u_1,u_2) = \frac{1}{2}(h_1({x}_1,u_1)+ h_2({x}_2),u_2)$ then,
		\begin{enumerate}
			\item $T_2 \times_{\mathcal{F}}^{(\epsilon_u,\epsilon_y)}T_1\preceq_{IOS}^{(\epsilon_u,\epsilon_y/2)} T_2 $,
			\item $T_2 \times_{\mathcal{F}}^{(\epsilon_u,\epsilon_y)}T_1\preceq_{IOS}^{(\epsilon_u,\epsilon_y/2)} T_1 $
		\end{enumerate}
		\item $h_{12}({x}_1, {x}_2,u_1,u_2)=h_1({x}_1,u_1)$, then  $$T_2 \times_{\mathcal{F}}^{(\epsilon_u,\epsilon_y)}T_1\preceq_{IOS}^{(\epsilon_u,\epsilon_y)} T_2 $$
		\item $h_{12}({x}_1, {x}_2,u_1,u_2)=h_2({x}_2,u_2)$, then  $$T_2 \times_{\mathcal{F}}^{(\epsilon_u,\epsilon_y)}T_1\preceq_{IOS}^{(\epsilon_u,\epsilon_y)} T_1. $$
	\end{enumerate}
\end{proposition}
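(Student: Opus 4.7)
My plan is to exhibit, in each of the three cases, an explicit candidate approximate input-output simulation relation between the composed system $T_2\times_{\mathcal{F}}^{(\epsilon_u,\epsilon_y)}T_1$ and the target component system, and then verify both clauses of \Cref{def:IOS}. Because the state set of the composition is built from pairs $(x_1,x_2)$ whose components are already tied by the underlying approximate input-output alternating simulation relation $\mathcal{R}$, the right candidates are the projections. Define
\begin{equation*}
\mathcal{R}_1 \;=\; \{\,((x_1,x_2),\,x_1)\;:\;(x_1,x_2)\in X_{12}\,\}\subseteq X_{12}\times X_1,
\end{equation*}
and analogously $\mathcal{R}_2$ using the second component. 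The coverage requirement (``for every state in the source there exists a related state in the target'') is immediate from the construction of $X_{12}$.

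Next, I will verify clause 1 of \Cref{def:IOS}. Given a state $(x_1,x_2)\in X_{12}$ and an admissible composite input $(u_1,u_2)\in U_{12}$, the natural matching inputs in $T_1$ and $T_2$ are $u_1$ and $u_2$ respectively, and by the very definition of $U_{12}$ we have $\mathbf{d_U}(u_1,u_2)\le\epsilon_u$, so the input-proximity condition holds. The output-proximity bound is where the three cases separate. In case 1, using $h_{12}=\tfrac12(h_1+h_2)$ and the linearity of the norm,
\begin{equation*}
\mathbf{d_Y}\!\bigl(h_{12}(x_1,x_2,u_1,u_2),\,h_i(x_i,u_i)\bigr)
= \tfrac12\,\mathbf{d_Y}\!\bigl(h_1(x_1,u_1),h_2(x_2,u_2)\bigr)\le \epsilon_y/2,
\end{equation*}
for both $i=1,2$, where the final inequality uses the defining property of $X_{12}$ (which in turn inherits from clause 1 of \Cref{def:IOAS}). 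In cases 2 and 3, $h_{12}$ agrees with one of the component maps, so the output distance is either $0$ (when matching to the system whose output is used) or bounded by $\epsilon_y$ directly from the $X_{12}$ definition (when matching to the other system); this yields precision $\epsilon_y$ as claimed.

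For clause 2 of \Cref{def:IOS}, I will use the construction of the composition transitions. Any transition $(x_1,x_2)\xrightarrow{(u_1,u_2)}(x_1',x_2')$ in $T_2\times_{\mathcal{F}}^{(\epsilon_u,\epsilon_y)}T_1$ requires, by definition, that $x_1\xrightarrow{u_1}x_1'$ in $T_1$ and $x_2\xrightarrow{u_2}x_2'$ in $T_2$ while $(x_1,x_2,u_1,u_2)\in\mathcal{F}$. Hence the target-system transition needed to satisfy clause 2 is supplied for free; what remains is to check that the successor pair $(x_1',x_2')$ still lies in $X_{12}$, so that $((x_1',x_2'),x_i')$ is again in $\mathcal{R}_i$. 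This is the main subtlety of the proof: it must be extracted from clause 2 of \Cref{def:IOAS}, which guarantees for the chosen matching input that every $T_2$-successor $x_2'$ can be paired with a $T_1$-successor $x_1'$ such that $(x_1',x_2')\in\mathcal{R}$, and then one invokes clause 1 of \Cref{def:IOAS} at the new state $(x_1',x_2')$ to conclude $(x_1',x_2')\in X_{12}$. The hard part will be confirming that the transitions included in the composition preserve this membership in $X_{12}$ under the feedback relation $\mathcal{F}$; once this closure is established, assembling the three claimed simulation statements from the case-by-case output bounds above is routine.
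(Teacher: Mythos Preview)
Your approach is correct, but it differs from the paper's in a simple way: the paper does not actually give an argument for this proposition; it just records that the statement is a direct consequence of Proposition~11.8 in Tabuada's book \cite{tabuada2009}. What you have done is spell out the standard projection-relation argument underlying that reference, which is the right self-contained proof.

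One simplification: you flag as the ``hard part'' verifying that a successor $(x_1',x_2')$ of a transition in $T_2\times_{\mathcal{F}}^{(\epsilon_u,\epsilon_y)}T_1$ again lies in $X_{12}$. This is not something you need to establish from the IOAS clauses: by the transition-system framework, the transition map of the composition has codomain $X_{12}$, so any transition $(x_1,x_2)\xrightarrow{(u_1,u_2)}(x_1',x_2')$ already presupposes $(x_1',x_2')\in X_{12}$. The three bulleted conditions in the definition of the composition are restrictions on which pairs of component transitions are admitted, not a recipe that might produce non-states. With that observation, clause~2 of \Cref{def:IOS} for your projection relation $\mathcal{R}_i$ is immediate from the component transition $x_i\xrightarrow{u_i}x_i'$, and the remainder of your outline (the case split on the output map and the resulting $\epsilon_y/2$ versus $\epsilon_y$ bounds) is exactly right. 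A minor point worth tidying in a final write-up is how the input metric $\mathbf{d_U}$ is interpreted between $U_{12}$ and $U_i$; the natural choice is to match $(u_1,u_2)$ with $u_i$, giving input distance $0$, which strengthens (not weakens) the conclusion.
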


\begin{proof}  This is a direct consequence of Proposition 11.8 in \cite{tabuada2009}. 
\end{proof}

\noindent Based on \Cref{prop1} and \Cref{result1} we obtain the following result.

\begin{theorem}\label{passive_composition}
	\hspace{-1mm}Let $ T_{1}\hspace{-1mm}:=\hspace{-1mm}(X_1,U_1, \xrightarrow [\quad \tau \quad]{}, Y_1, \mathcal{H}_1)$ and  $ T_2\hspace{-1mm}:=\hspace{-1mm}(X_2,U_2, \xrightarrow [\quad \tau \quad]{}, Y_2, \mathcal{H}_2 )$ be two transition systems with a common time period $\tau$ and common input and output sets equipped with euclidean norm as the metric. Let $T_2$ be $(\epsilon_u,\epsilon_y)$ - approximately input-output alternatingly similar to $T_1$, \(T_2\preceq^{(\epsilon_u, \epsilon_y)}_{IOAS} T_1\). If $T_1$ is $QSR$ - dissipative with respect to an output function $h(x_1, u_1)$ where  $x_1\in X_1$ and $ {u}_1 \in U_1$, then $T_2 \times_{\mathcal{F}}^{(\epsilon_u,\epsilon_y)}T_1$  is also $QSR$ dissipative w.r.t. $\frac{1}{2}\left(h_1(x_1,u_1)+ h_2(x_2,u_2)\right)$ where $(x_1,x_2,u_1,u_2) \in \mathcal{F}$, and the dissipation matrices \(Q_{12}\), \(S_{12}\), \(R_{12}\) of the composed system satisfy
    \begin{gather}\label{cond1thm4}
    \begin{aligned}
	\underline{\lambda}(Q_{12}-Q_1)-\zeta_1\|Q_{12}\|^2_2 -\zeta_3 &\geq 0\\
	\underline{\lambda}(R_{12}-R_1)- \zeta_2\|S_{12}\|^2_2 -\zeta_4 \|R_{12}\|^2_2& \geq 0 \\
	S_{12} &= S_1
	\end{aligned}
    \end{gather}	
	where \(\zeta_1\), \(\zeta_2\), \(\zeta_3\), \(\zeta_4\) \(\in \mathbb{R}^+\) are arbitrary non-zero constants, \(Q_1, S_1, R_1\) are the dissipativity matrices for \(T_1\). \(\underline{\lambda}(\cdot)\) and \(\overline{\lambda}(\cdot)\) represent the smallest and largest eigen values of the concerned matrix.
	
	Also, $T_2 \times_{\mathcal{F}}^{(\epsilon_u,\epsilon_y)}T_1$  is $QSR$-dissipative w.r.t. $h_2(x_2, {u_2})$ where $x_2 \in X_2$ such that $(x_1,x_2,u_1,u_2) \in X_{12} \times U_{12}$ with the dissipation matrices \(Q_{12}\), \(S_{12}\), \(R_{12}\) as in (\ref{cond1thm4}).
	
\end{theorem}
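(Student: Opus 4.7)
The proof plan is to combine \Cref{prop1} with \Cref{result1}: \Cref{prop1} lets me replace the feedback composition by $T_1$ (or $T_2$) up to an approximate input-output simulation, and then \Cref{result1} transfers QSR dissipativity across that simulation. The key observation is that in every statement of \Cref{prop1}, the composition is on the \emph{left} of the $\preceq_{IOS}$ symbol, i.e., it is simulated by $T_1$ (or $T_2$); this is exactly the orientation required by \Cref{result1}, which concludes dissipativity of the simulated system from dissipativity of the simulating one.

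For the first claim, where the composition has the symmetric output $h_{12}(x_1,x_2,u_1,u_2)=\tfrac{1}{2}(h_1(x_1,u_1)+h_2(x_2,u_2))$, I would apply part 1(b) of \Cref{prop1} to obtain
\begin{equation*}
T_2 \times_{\mathcal{F}}^{(\epsilon_u,\epsilon_y)} T_1 \ \preceq_{IOS}^{(\epsilon_u,\epsilon_y/2)} \ T_1.
\end{equation*}
Since $T_1$ is QSR-dissipative with matrices $Q_1, S_1, R_1$ by hypothesis, \Cref{result1} directly yields that $T_2 \times_{\mathcal{F}}^{(\epsilon_u,\epsilon_y)} T_1$ is QSR-dissipative with matrices $Q_{12}, S_{12}, R_{12}$ satisfying \cref{cond1thm4}. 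The conditions in \cref{cond1thm4} are literally the same algebraic inequalities as \cref{cond1} with the index $2$ replaced by $1$ and the index $1$ replaced by $12$, so no additional work is needed beyond instantiating the result.

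For the second claim, with output $h_2(x_2,u_2)$, I would instead apply part 3 of \Cref{prop1} to obtain
\begin{equation*}
T_2 \times_{\mathcal{F}}^{(\epsilon_u,\epsilon_y)} T_1 \ \preceq_{IOS}^{(\epsilon_u,\epsilon_y)} \ T_1,
\end{equation*}
and again invoke \Cref{result1} with $T_1$ as the QSR-dissipative simulating system. The conclusion is the same set of inequalities \cref{cond1thm4}, showing that the choice of output (symmetric average vs.\ $h_2$) does not alter the algebraic conditions the dissipativity matrices must satisfy.

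The only mildly delicate point — which I expect is the main thing to check — is that \Cref{result1} is applicable with the $\epsilon_y/2$ precision appearing in part 1(b) of \Cref{prop1}. However, the conditions \cref{cond1} do not explicitly involve $\epsilon_u$ or $\epsilon_y$: they depend only on the dissipativity matrices and the arbitrary positive constants $\zeta_1,\zeta_2,\zeta_3,\zeta_4$. Consequently the halving of $\epsilon_y$ is absorbed into the choice of these free constants, and both claims follow by a direct composition of the two previously established results. In this sense the theorem is essentially a corollary: the nontrivial content lives inside \Cref{prop1} (composition-compatibility of approximate input-output simulation) and \Cref{result1} (transfer of QSR dissipativity), and the proof amounts to chaining them together for each of the two choices of output map.
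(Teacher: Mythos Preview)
Your proposal is correct and matches the paper's own proof essentially line for line: the paper also splits into the two output cases, invokes \Cref{prop1} parts 1(b) and 3 to obtain $T_{12}\preceq_{IOS}^{(\epsilon_u,\epsilon_y/2)} T_1$ and $T_{12}\preceq_{IOS}^{(\epsilon_u,\epsilon_y)} T_1$ respectively, and then applies \Cref{result1}. Your observation that the precision parameters do not appear in \cref{cond1} (so the $\epsilon_y/2$ versus $\epsilon_y$ distinction is immaterial to the resulting inequalities) is exactly the right justification, and is left implicit in the paper.
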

\begin{proof}
	See Appendix.
\end{proof}

This result states that once two transition systems are approximately feedback composable, then QSR dissipativity of one of those transition systems implies QSR dissipativity of the composed transition system. Since passivity is a special case of QSR dissipativity, \Cref{passive_composition} can be applied to design discrete supervisory controllers for plants, while preserving the passive nature of the interconnection. 

The continuous-time system should be preceded by a sample and hold element to convert the common quantized input symbol into a piecewise constant input. Under this framework, the interconnected system is passive w.r.t. outputs 
\begin{itemize}
	\item[(i)] $h_2(x_2,u_2)$, where $x_2$ is a discrete state of the controller and 
	\item[(ii)] $\frac{1}{2}\left(h_1(x_1,u_1)+h_2(x_2,u_2)\right)$ where $x_1$ is the discrete plant state and $x_2$ is the discrete controller state.
\end{itemize}

\noindent Once the interconnection is passive, we can guarantee the stability of this interconnection \cite{willems}. Further discussion on supervisory control using passivity is a subject of our future work.

\section{Conclusion}
\label{conclusion}
In this paper we characterized the dissipativity properties of a system and its abstracted model. First we presented results to compute the \(Q, \text{\space}S\), and $R$ dissipativity matrices of transition system from those of its approximate input-output similar abstraction. We also provided conditions determining the dissipativity matrices of an abstract system from those of the corresponding incrementally forward complete continuous system. Abstraction was obtained by an approximate input-output simulation of the continuous system. Further, we considered the approximate feedback composition of a continuous QSR dissipative system with a finite state transition system that may be a symbolic controller. We showed that if one of the components in this interconnection is QSR dissipative,  then the approximate feedback composition is also QSR dissipative.

We can also use the results presented here to design symbolic controllers to passivate and hence stabilize a dynamical system. Although a lot of work has been done to design continuous passivating controllers for a variety of dynamical systems, it is challenging to design such continuous controllers when other non-traditional control constraints, for example expressed in terms of temporal logic, need to be met along with passivity specifications. In such cases, the importance of supervisory or discrete controllers is more apparent.  As a future work, we will concentrate on designing controllers that passivate the system along with ensuring other system specifications such as safety and reachability.



\appendix
\section*{Appendix}
\subsection*{Proof of \Cref{result1}}
\begin{proof}
Suppose at any time $t$, the input, output and state of transition system $T_i(\Sigma_i)$ be $u_i(t), \ y_i(t)$ and $x_i(t)$ respectively. Define the supply rate function $\omega_2(u_2(t),y_2(t))=y_2^T(t)Q_2y_2(t) + 2 y_2^T(t)S_2u_2(t) + u_2^T(t)R_2u_2(t)$ and storage function $V(\cdot)$. Since $T_2$ is $QSR$ dissipative with dissipativity matrices $Q_2,\ S_2$ and $R_2$, $r_2 = \int_{t_0}^{t_1}\omega_2(u_2(t),y_2(t))dt\geq V(x_2(t_1))-V(x_2(t_0)).$
Consider another quadratic supply rate function $\omega_1(u_1(t),y_1(t)) =y_1^T(t)Q_1y_1(t) + 2 y_1^T(t)S_1u_1(t) + u_1^T(t)R_1u_1(t).$ Let this be the quadratic supply rate for transition system  \(T_1\). Define $r_1 = \int_{t_0}^{t_1} \omega_1(u_1(t),y_1(t)) dt$.

$ T_2$ is $(\epsilon_u,\epsilon_y)$ - approximately input-output similar to $T_1$ with an approximate input-output simulation relation $\mathcal{R}$. Hence we can always find a transition $x_2 \xrightarrow [\quad t_1 \quad]{u_{\textbf{2}}} x_2'$ in $T_2$ for every transition $x_1 \xrightarrow [\quad t_1 \quad]{u_1}x'_1$ in $T_1$  such that \highlighttext{$\mathbf{d_U}(u_1,u_2)=\|u_1-u_2\|=\|-\Delta u \|\leq \epsilon_u$ and \(\mathbf{d_Y}(h_1(x_1,u_1),h_2(x_2,u_2))=\|y_1-y_2\|=\|-\Delta y \|\leq \epsilon_y\) with \((x_1,x_2)\in\mathcal{R}\) and \((x_1',x_2')\in\mathcal{R}\). For ease of representation, we drop the time index from input and output signals.}
\small
\begin{align}
r_1 = \hspace{-2mm}\int_{t_0}^{t_1} ((y_2-\Delta y)^T Q_1 (y_2 - \Delta y)+ 2 (y_2 - \Delta y)^T S_1 (u_2-\Delta u)+ (u_2-\Delta u)^T R_1 (u_2-\Delta u) )dt\nonumber\end{align} \vspace{-3mm}
\begin{align}\label{dif_eq}
&r_1-r_2 = \int_{t_0}^{t_1} (y_2^T (Q_1-Q_2)y_2 + 2 y_2^T (S_1-S_2)u_2 +  u_2^T (R_1-R_2) u_2 - 2 y_2^T Q_1 \Delta y \nonumber \\
 &\qquad \qquad- 2 \Delta y^T \hspace{-1mm} S_1 u_2- 2  y_2^T\hspace{-1mm} S_1 \Delta u-2 u_2^T\hspace{-1mm} R_1 \Delta u + 2 \Delta y^T\hspace{-1mm} S_1 \Delta u+ \Delta u^T \hspace{-1mm}R_1 \Delta u + \Delta y^T\hspace{-1mm} Q_1 \Delta y)dt \nonumber\\
\end{align} 
\vspace{-10mm}
 \begin{gather}\label{dif_eq1}
\begin{aligned}
2\langle y_2, Q_1 \Delta y\rangle_{t_1} & \leq \frac{\epsilon_y}{\zeta_1}+\zeta_1\|Q_1\|^2_2\langle y_2, y_2\rangle_{t_1}, \quad
2\langle \Delta y, S_1 u_2\rangle_{t_1} &\leq \frac{\epsilon_y}{\zeta_2}+\zeta_2\|S_1\|^2_2\langle u_2, u_2\rangle_{t_1}\\
2 \langle  y_2, S_1 \Delta u\rangle_{t_1} &\leq \frac{\epsilon_u}{\zeta_3}\|S_1\|^2_2 + \zeta_3 \langle y_2, y_2\rangle_{t_1}, \quad
2\langle u_2, R_1 \Delta u\rangle_2 &\leq \frac{\epsilon_u}{\zeta_4}+\zeta_4\|R_1\|^2_2\langle u_2, u_2\rangle_{t_1}\\
\langle \Delta u, R_1 \Delta u \rangle_{t_1} &\geq \underline{\lambda}(R_1)\langle \Delta u , \Delta u \rangle_{t_1},\qquad \quad
\langle \Delta y, Q_1 \Delta y\rangle_{t_1} &\geq \underline{\lambda}(Q_1)\langle \Delta y , \Delta y \rangle_{t_1}
\end{aligned}
\end{gather}
\normalsize
where \(\underline{\lambda}(\cdot)\) is the smallest eigen value of matrix under consideration and \highlighttext{notation $\langle y, y\rangle_{t_1} = \int_{t_0}^{t_1} y^T(t)y(t)dt$}.

Substituting the set of equations (\ref{dif_eq1}) in (\ref{dif_eq}), we can show that if (\ref{cond1}) is satisfied then 
then \(r_1> \beta_1\), where \(\beta_1 = r_2 - \frac{\epsilon_y}{\zeta_1} - \frac{\epsilon_y}{\zeta_2}-\frac{\epsilon_u}{\zeta_3}\|S_1\|^2_2 - \frac{\epsilon_u}{\zeta_4} - max\{0, \overline{\lambda}(-Q_1)\}\epsilon_y - max \{0, \overline{\lambda}(-R_1)\}\epsilon_u - max\{0, -2\langle \Delta y, S_1 \Delta u \rangle_{t_1}\} \). Since $T_2$ is $QSR$ dissipative, as per Theorem 3.1.11 of \cite{van der schaft book 2017}, the available storage 
$S_{a_2}(x_2(t_0)) = \sup\limits_{u_2(\cdot),t_1\geq t_0} - r_2(t_1) < \infty,$ $x_2(t_0) = x_2 \ \forall x_2 \in X_2$. In particular, this is true for any $x_2:(x_1,x_2)\in \mathcal{R}$. Clearly, available storage $S_{a_1}(x_1) = \sup\limits_{u_1(\cdot),t_1\geq0} - r_1(t_1) < \infty,\ x_1(t_0) = x_1 \ \forall x_1 \in X_1$. Therefore, from Theorem 3.1.11 of \cite{van der schaft book 2017}, $T_1$ is $QSR$ dissipative.
\end{proof}
\subsection*{Proof of \Cref{main_corr1}}
\begin{proof}
\highlighttext{We first show that  $ T_{\tau}(\Sigma)\preceq^{(\epsilon_u, \epsilon_y)}_{IOS} T_{\tau,\mu,\eta}(\Sigma)$. Let us define a relation $\mathcal{R}\subset X_\tau \times X_q$ such that $(x_\tau,x_q)\in \mathcal{R}$ if $\|x_{\tau}-x_q\| \leq \eta/2$. Note that for any $x_{\tau} \in X_{\tau}$ and $u_{\tau} \in U_{\tau}$, there always exist $x_q \in X_{\textbf{q}}$ and $u_q \in U_{\textbf{q}}$ such that $\|x_{\tau}-x_q\| \leq \eta/2 \leq \epsilon_y$ and $\|u_{\tau}-u_q\| \leq \mu/2 \leq \epsilon_u$. This is possible because of the specific quantization which allows $x_{\tau}$ to be within $\eta/2$ radius of $x_{q}$ and $u_{\tau}$ to be within $\mu/2$ radius of $u_q$. From the definitions of output functions $H_{m_\tau}(x_\tau,u_\tau)=x_\tau$ and $H_{m_q}(x_q,u_q)=x_q$, we have $\|H_{m_\tau}(x_{\tau},u_\tau)-H_{m_{q}}(x_{\textbf{q}},u_\tau)\|=\|x_{\tau}-x_q\| \leq \epsilon_y$, hence condition (i) of \Cref{def:IOS} is satisfied.
\newline Now if we consider the transition $x_{\tau} \xrightarrow [\quad \tau \quad]{u_{\tau}} x_{\tau}'$ in the transition system $T_{\tau}(\Sigma)$ , then the distance between $x_{\tau}'$  and  $\xi(\tau, x_{\textbf{q}},u_{\textbf{q}})$ can estimated based on the incremental forward complete property of $\Sigma$,
\begin{equation*}
\|x_{\tau}'-\xi(\tau, x_{\textbf{q}},u_{\textbf{q}})\|\leq \alpha_1(\eta,\tau)+\alpha_2(\mu, \tau)\leq \alpha_1(\epsilon_y,\tau)+\alpha_2(\epsilon_u, \tau)
\end{equation*}
As mentioned earlier, due to the particular structure of quantization, for any $x_\tau'\in X_\tau$ there always exists $x_q'\in X_q$ such that 
\begin{equation} \label{ineq_eta1}
\|x_{\tau}'-x_{\textbf{q}}'\| \leq \eta/2
\end{equation}. 
From the triangular inequality we have
\begin{align*}
\|\xi(\tau, x_{\textbf{q}},u_{\textbf{q}})-x_{\textbf{q}}'\|&\leq\|\xi(\tau, x_{\textbf{q}},u_{\textbf{q}})-x_{\tau}'\|+ \|x_{\tau}'-x_{\textbf{q}}'\|\\
&\leq\alpha_1(\epsilon_y,\tau)+\alpha_2(\epsilon_u, \tau)+ \eta/2\\
&\leq\alpha_1(\theta_1,\tau)+\alpha_2(\theta_2, \tau)+ \eta/2
\end{align*}
which, by the \Cref{sampled-quantized system different output} of $T_{\tau, \mu, \eta}(\Sigma)$ implies the existence of $x_{\textbf{q}} \xrightarrow [\quad \quad]{u_{\textbf{q}}} x_{\textbf{q}}'$ in $T_{\tau, \mu, \eta}(\Sigma)$. Therefore, from inequality \eqref{ineq_eta1} we conclude that $ (x'_{\tau},x'_{\textbf{q}})\in \mathcal{R}$ and condition (ii) in \Cref{def:IOS} holds. Thus, $ T_{\tau}(\Sigma)\preceq^{(\epsilon_u, \epsilon_y)}_{IOS} T_{\tau,\mu,\eta}(\Sigma)$.
\newline  \indent Along similar lines we can prove that $T_{\tau,\mu,\eta}(\Sigma)\preceq^{(\epsilon_u, \epsilon_y)}_{IOAS} T_{\tau}(\Sigma)$. The key is to define $\mathcal{R} \subseteq X_q \times X_\tau$ such that $(x_q, x_\tau)\in \mathcal{R}$ if $\|x_q-x_\tau\|=0$ and notice that for every $x_q \in X_q$, we can choose $x_\tau = x_q, x_\tau \in X_\tau$ which satisfies condition (i) of definition 4 (i.e., $\|x_{\tau}-x_{\textbf{q}} \|=0<\epsilon_y$). This is possible because $X_{\textbf{q}} \subseteq X_{\tau}$. 
\newline For every $u_q \in U_q$, choose $u_{\tau}=u_{\textbf{q}}, u_\tau\in U_\tau$ (this satisfies $\|u_{\tau}-u_{\textbf{q}}\|=0<\epsilon_u$). Consider the unique transition $x_{\tau} \xrightarrow [\quad \tau \quad]{u_{\tau}} x_{\tau}' = \xi(\tau, x_{\tau}, u_{\tau}) \in \textbf{Post}_{u_{\tau}}(x_{\tau}) $. The distance between $x_{\tau}'$ and $\xi(\tau, x_{\textbf{q}}, u_{\textbf{q}})$ can be bounded using the incrementally forward complete property of $\Sigma$, i.e.,
\begin{equation}\label{ioas_ineq11}
\|x_{\tau}'-\xi(\tau, x_{\textbf{q}},u_{\textbf{q}})\|\leq \alpha_1(0,\tau)+\alpha_2(0, \tau)
\end{equation}
Proceeding same as the proof of $ T_{\tau}(\Sigma)\preceq^{(\epsilon_u, \epsilon_y)}_{IOS} T_{\tau,\mu,\eta}(\Sigma)$, we can show that for every $x_\tau' \in \textbf{Post}_{u_{\tau}}(x_{\tau})$ there exists $x_q' \in \textbf{Post}_{u_{q}}(x_{q})$ such that $(x_q',x_\tau'\in\mathcal{R})$. This is condition (ii) of \Cref{def:IOAS}. Thus, $T_{\tau,\mu,\eta}(\Sigma)\preceq^{(\epsilon_u, \epsilon_y)}_{IOAS} T_{\tau}(\Sigma)$.
}
\end{proof}
\subsection*{Proof of \Cref{result2}}
\begin{proof}
At any discrete time k, the state, input and output of \(T_{\tau,\mu,\eta}\) in \Cref{sampled-quantized system different output} are $x_q$, $u_q$ and $h(x_q,u_q)$. Since the original continuous time system $\Sigma$ is QSR dissipative, inequality (\ref{QSRcontinuous}) holds for any input and all \(t_1>t_0>0\). Therefore, it will be valid even if we substitute \(t_0=k\tau,\ t_1= (k+1)\tau\) and \(u(t) = u_q \in U_{q}\subset U\) for $k\tau\leq t\leq(k+1)\tau$ where \(\tau\) is the sampling time and $x(t_0)=x_q \in X_q \subset X$. Under these conditions, system output of $\Sigma$ is $h(x(t),u_q)$. For simplicity of notation we represent \(h(x_q,u_q)\) as \(y(k\tau)\) and \(h(x(t),u_q)\) as \(y(t)\).
Dissipativity inequality becomes,
\begin{align}
\label{eq3}
\int_{k\tau}^{(k+1)\tau} (y(t)^T Q y(t) +  y(t)^T S u_q+  u_q^T R u_q)dt \geq V(\xi( \tau, x_q, u_q))-V(x_q)
\end{align}
 


\noindent\textbf{Bounds for $\mathbf{\int_{k\tau}^{(k+1)\tau} y(t)^T Q y(t) \ dt}$} 
\small
\begin{align}\label{a}
&\left|\int_{k\tau}^{(k+1)\tau} (y(t)^T Q y(t) - y(k\tau)^T Q y(k\tau)) \ dt \right|
 = \left| \int_{k\tau}^{(k+1)\tau} \int_{k\tau}^t \frac{d}{ds}(y^T(s)Qy(s))ds\ dt \right| \nonumber\\ 
& \leq 2 \int_{k\tau}^{(k+1)\tau} \int_{k\tau}^t \|Q^Ty(s)\|_2 \|\dot{y}(s)\|_2 ds\ dt
 \leq 2 \int_{k\tau}^{(k+1)\tau} \int_{k\tau}^\tau \|Q^Ty(s)\|_2 \|\dot{y}(s)\|_2 ds\ dt \nonumber \\
& \leq \tau \|Q\|_2 (\int_{k\tau}^{(k+1)\tau}\hspace{-2mm} y(t)^T y(t) \ dt + \gamma^2{\tau} u_q^T u_q)
\end{align} 
\vspace{-3mm}
\begin{align}
\label{c}
\left| \int_{k\tau}^{(k+1)\tau} \hspace{-4mm}(y(t)^T\hspace{-1mm}y(t) -  y(k\tau)^T \hspace{-1mm}y(k\tau)) \ dt \right| 
& \leq \hspace{-1mm} \int_{k\tau}^{(k+1)\tau} \hspace{-4mm} (\|y(t) - y(k\tau)\|_2^2 + 2 \|y(k\tau)\|_2\|y(t)-y(k\tau)\|_2) dt\nonumber
\end{align}
\normalsize
Using \(\|y(t) - y(k\tau)\|_2\leq \sqrt{\tau}\sqrt{\int_{k\tau}^{(k+1)\tau}  \|y(s)\|^2_2ds}\leq\tau \gamma \|u_q\|_2\) in the above equation, 
\small
$$\int_{k\tau}^{(k+1)\tau} y(t)^T y(t) \ dt \leq (\tau^2\gamma^2+\tau\gamma)  \int_{k\tau}^{(k+1)\tau} u_q^Tu_q\ dt + (\tau\gamma+1) \int_{k\tau}^{(k+1)\tau} y(k\tau)^T y(k\tau)\ dt.$$
\normalsize This can be used in (\ref{a}) to compute bounds on the first term of (\ref{eq3}), \small
\begin{multline}
\label{d}
 \int_{k\tau}^{(k+1)\tau} y(t)^T Q y(t)\ dt
\leq   y(k\tau)^T (\tau Q+\tau^2\|Q\|_2(\tau\gamma+1)\mathbf{I})y(k\tau)\\
+  u_q^T (\tau^3\|Q\|_2(\tau\gamma^2+\gamma)+\tau^2\gamma^2\|Q\|_2)\mathbf{I}u_q
\end{multline}
\normalsize
\textbf{Bounds for $\mathbf{ \int_{k\tau}^{(k+1)\tau} y(t)^T S u_q\ dt}$}
\small
\begin{flalign}
\label{ee}
&\left| \int_{k\tau}^{(k+1)\tau} y(t)^T S u_q\ dt -  \int_{k\tau}^{(k+1)\tau} y(k\tau)^TSu_q\ dt\right| \nonumber\\
&\leq \|Su_q\|_2\int_{k\tau}^{(k+1)\tau}\int_{k\tau}^t\|\dot{y}(s)\|_2dsdt
 \leq \tau\sqrt{\tau} \|Su_q\|_2 \sqrt{\int_0^\tau\|\dot{y}(s)\|^2_2ds} 
\leq  \int_{k\tau}^{(k+1)\tau} \hspace{-2mm}u_q^T \tau \gamma \|S\|_2 u_q \ dt \nonumber \end{flalign}
\begin{equation}
\Rightarrow  \int_{k\tau}^{(k+1)\tau} y(t)^T S u_q\ dt \leq  u_q^T \tau^2 \gamma \|S\|_2 u_q +  {\tau} y(k\tau)^T S u_q
\end{equation}
\normalsize
\textbf{Bounds for $V(\xi( \tau, x_q, u_q))$: }
Now we consider a  transition $x_{\textbf{q}} \xrightarrow [\quad \tau \quad]{u_{\textbf{q}}} x_{\textbf{q}}'$ in  $ T_{\tau, \mu, \eta}(\Sigma)$ and by Definition of $ T_{\tau, \mu, \eta}(\Sigma)$ we have  $\|\xi(\tau, x_{\textbf{q}}, u_{\textbf{q}})-x_{\textbf{q}}'\|\leq \alpha_1(\theta_1, \tau)+\alpha_2(\theta_2, \tau)+\eta/2$. For Lipschitz continuous storage functions
\begin{equation}\label{inq7}
V(\xi(\tau, x_{\textbf{q}}, u_{\textbf{q}}))\geq \hspace{-1mm} V(x_{\textbf{q}}')-L(\|x_{\textbf{q}}'-\xi(\tau, x_{\textbf{q}}, u_{\textbf{q}})\|) \geq V(x_{\textbf{q}}')-L(\alpha_1(\theta_1, \tau)+\alpha_2(\theta_2, \tau)+\eta/2) \end{equation}
\normalsize
Using (\ref{d}), (\ref{ee}) and (\ref{inq7}) in (\ref{eq3}) gives the result.
\end{proof}
\subsection*{Proof of Theorem \Cref{main_corr2}}
\begin{proof}
\highlighttext{We first show that  $ {T}_{\tau}(\Sigma)\preceq^{(\epsilon_u, \epsilon_y)}_{IOS} {T}_{\tau,\mu,\eta}(\Sigma)$.  Consider a relation \(\mathcal{R}\subset X_\tau\times X_q\) such that $(x_\tau,x_q)\in\mathcal{R}$ if $\|x_{\tau}-x_{\textbf{q}}\| \leq \eta/2$. Note that for any $x_{\tau} \in X_{\tau}$ and $u_{\tau} \in U_{\tau}$, there always exist $x_q \in X_{\textbf{q}}$ and $u_q \in U_{\textbf{q}}$ such that $\|x_{\tau}-x_q\| \leq \eta/2$ and $\|u_{\tau}-u_q\| \leq \mu/2 \leq \epsilon_u$. This is possible because of the specific quantization which allows $x_{\tau}$ to be within $\eta/2$ radius of $x_{q}$ and $u_{\tau}$ to be within $\mu/2$ radius of $u_q$.
\newline \noindent From the definitions of output functions we have, 
\begin{align}
\|h_{\tau}(x_{\tau},u_{\tau})-h_q(x_q,u_q)\|&=\|h(x_\tau,u_\tau)-h(x_q,u_q)+h(x_q,u_q)-h_q(x_q,u_q)\| \nonumber\\
& \leq K_1\|x_\tau - x_{q}\| + K_2\|u_\tau - u_{q}\| + \|h(x_q,u_q)-h_q(x_q,u_q)\| \nonumber \\
& \leq K_1\eta/2+K_2\mu/2 +\mu/2 \leq \epsilon_y
\end{align}
Hence condition (i) of \Cref{def:IOS} is satisfied. Along the lines of proof of \Cref{main_corr1}, it can be shown that condition (ii) of \Cref{def:IOS} is also satisfied, proving $ {T}_{\tau}(\Sigma)\preceq^{(\epsilon_u, \epsilon_y)}_{IOS} {T}_{\tau,\mu,\eta}(\Sigma)$.
\newline \indent Similar to the steps described above and proof of \Cref{main_corr1}, we can show that $T_{\tau,\eta,\mu}(\Sigma)\preceq^{(\epsilon_u, \epsilon_y)}_{IOAS} T_{\tau}(\Sigma)$.}
\end{proof}
\subsection*{Proof of Theorem \Cref{result3}}
\begin{proof}
At any discrete time k, the state, input and output of \(T_{\tau,\mu,\eta}\) in \Cref{sampled-quantized system same output} are $x_q$, $u_q$ and $h_q(x_q,u_q)$. Since the original continuous time system $\Sigma$ is QSR dissipative, inequality (\ref{QSRcontinuous}) holds for any input and all \(t_1>t_0>0\). Therefore, it will be valid even if we substitute \(t_0=k\tau,\ t_1= (k+1)\tau\) and \(u(t) = u_q \in U_{q}\subset U\) for $k\tau\leq t\leq(k+1)\tau$ where \(\tau\) is the sampling time and $x(t_0)=x_q \in X_q \subset X$. 
Under these conditions, system output of $\Sigma$ is $h(x(t),u_q)$.

For simplicity of notation we represent $h(x_q,u_q)=y(k\tau)$, \(h(x(t),u_q)=y(t)\), \(h_q(x_q,u_q)=\hat{y}(k\tau)\) and \highlighttext{\(\Delta y = y(k\tau)- \hat{y}(k\tau)\)}. Dissipativity inequality becomes,
\begin{align}\label{eq3thm3}
\int_{k\tau}^{(k+1)\tau} (y(t)^T Q y(t) +  y(t)^T S u_q+  u_q^T R u_q)dt \geq V(\xi( \tau, x_q, u_q))-V(x_q)
\end{align}

	
	In order to prove the dissipativity of the approximate input-output similar system \({T}_{\tau,\mu,\eta}\), we find bounds on each term of equation (\ref{eq3thm3}).
	
	\noindent\textbf{Bounds for $\mathbf{\int_{k\tau}^{(k+1)\tau} y(t)^T Q y(t) dt}$ :} From proof of \Cref{result2}
    \small
	\begin{align}
	&\int_{k\tau}^{(k+1)\tau} y(t)^T Qy(t)dt \nonumber \\
    &\leq y(k\tau)^T (Q\tau+\tau^2\|Q\|_2(\tau\gamma+1)\mathbf{I})y(k\tau) +  u_q^T (\tau^3\|Q\|_2(\tau\gamma^2+\gamma)+\tau^2\gamma^2\|Q\|_2)\mathbf{I}u_q\nonumber\\
	\label{eq4thm3}
	 & \leq \hat{y}(k\tau)^T (Q\tau+\tau^2\|Q\|_2(\tau\gamma+1)\mathbf{I})\hat{y}(k\tau) + 2\hat{y}(k\tau)^T(Q\tau+\tau^2\|Q\|_2(\tau\gamma+1)\mathbf{I})\Delta y \nonumber\\
	\qquad &\qquad + \Delta y^T (Q\tau+\tau^2\|Q\|_2(\tau\gamma+1)\mathbf{I})\Delta y +  u_q^T (\tau^3\|Q\|_2(\tau\gamma^2+\gamma)+\tau^2\gamma^2\|Q\|_2)\mathbf{I}u_q.
	\end{align}
	\begin{align}\label{result3eq1}
	2 \hat{y}(k\tau)^T (Q\tau+\tau^2\|Q\|_2(\tau\gamma+1)\mathbf{I})\Delta y
	 &\leq \tau\Delta y^T \Delta y + \tau\|Q+\tau\|Q\|_2(\tau\gamma+1)\mathbf{I}\|^2_2 \hat{y}(k\tau)^T \hat{y}(k\tau) \nonumber \\
	&\leq \tau m \frac{\mu^2}{4} + \tau \|Q+\tau\|Q\|_2(\tau\gamma+1)\mathbf{I}\|_2^2 \hat{y}(k\tau)^T \hat{y}(k\tau) \\
\label{result3eq2}
   \Delta y^T (Q\tau+\tau^2\|Q\|_2(\tau\gamma+1)\mathbf{I}) \Delta y
    & \leq \bar{\lambda}(Q\tau+\tau^2\|Q\|_2(\tau\gamma+1)\mathbf{I}) \Delta y^T \Delta y \nonumber\\
& \leq \frac{1}{4}m \tau \mu^2 \bar{\lambda}(Q+\tau\|Q\|_2(\tau\gamma+1)\mathbf{I})
	\end{align}
	\normalsize
	where $\bar{\lambda}(\cdot)$ is the largest eigen value of matrix under consideration. Using (\ref{result3eq1}) and (\ref{result3eq2}) in (\ref{eq4thm3}),
    \small
	\begin{align}
	\label{dthm3}
	\int_{k\tau}^{(k+1)\tau} \hspace{-3mm} y(t)^T Qy(t) dt &\leq  \hat{y}(k\tau)^T (Q+\tau\|Q\|_2(\tau\gamma+1)\mathbf{I}  + (\|Q+\tau\|Q\|_2(\tau\gamma+1)\mathbf{I}\|^2_2)\mathbf{I})\tau\hat{y}(k\tau) \nonumber\\
	& \hspace{-15mm} + u_q^T (\tau^3\|Q\|_2(\tau\gamma^2+\gamma)+\tau^2\gamma^2\|Q\|_2)\mathbf{I}u_q + \frac{1}{4} m \tau \mu^2(1+\bar{\lambda}(Q+\tau\|Q\|_2(\tau\gamma+1)\mathbf{I})).
	\end{align}
\normalsize	
	\textbf{Bounds for $\mathbf{2 \int_{k\tau}^{(k+1)\tau} y(t)^T S u_q \ dt}$}
	\small
    \begin{align}
	2\int_{k\tau}^{(k+1)\tau} \hspace{-3mm} (y(t)^T S u_q \ dt - \hat{y}(0)^T S u_q)dt
     	& \hspace{-1mm}\leq \int_{k\tau}^{(k+1)\tau} \hspace{-3mm} ((y(t) - \hat{y}(k\tau))^T \hspace{-1mm} (y(t) - \hat{y}(k\tau)) + (S u_q)^T \hspace{-1mm} (Su_q))dt \nonumber \\
	& \leq \int_{k\tau}^{(k+1)\tau} \hspace{-2mm} (y(t) - \hat{y}(k\tau))^T(y(t) - \hat{y}(k\tau))dt + \tau\|S\|^2_2 u_q^Tu_q \label{result3eq3}
    \end{align}
    \begin{align}
	 &\left| \int_{k\tau}^{(k+1)\tau} (y(t) - \hat{y}(k\tau))^T(y(t) - \hat{y}(k\tau))dt \right|\nonumber\\	
	& \leq \int_{k\tau}^{(k+1)\tau} \|y(t)-\hat{y}(k\tau)\|_2^2 dt 
\leq \int_{k\tau}^{(k+1)\tau}(\|y(t) - y(k\tau)\|_2 + \|y(k\tau)-\hat{y}(k\tau)\|_2)^2dt \nonumber \\ %
&\leq \tau(\sqrt{\gamma^2\int_{k\tau}^{(k+1)\tau} \hspace{-3mm} \|u_q\|^2_2dt} + \sqrt{m}\mu/2)^2 
	\leq (\gamma^2\tau^2 + \gamma\tau\sqrt{m\tau}\mu) u_q^Tu_q + \frac{\mu\tau (m\mu+\gamma\sqrt{m\tau})}{4} \label{result3eq4}. \nonumber
	\end{align}
    \normalsize
	Using this in (\ref{result3eq3}),
    \small
	\begin{align}
	\label{ethm3}
	2\int_{k\tau}^{(k+1)\tau} \hspace{-3mm} y(t)^T \hspace{-1mm} S u_q \ dt & \leq 2\tau \hat{y}(k\tau)^T \hspace{-1mm} S u_q + (\gamma^2\tau^2 + \gamma\tau\sqrt{m\tau}\mu + \tau\|S\|^2_2) u_q^Tu_q + \frac{\mu\tau(m\mu+\gamma\sqrt{m\tau})}{4}.
	\end{align}
    \normalsize
    Similar to the proof of \Cref{result1}, $V(\xi( \tau, x_q, u_q))$ can also be bounded.
	We can use this bound on $V(\xi( \tau, x_q, u_q))$, (\ref{dthm3}) and (\ref{ethm3}) to bound the terms in (\ref{eq3thm3}) and rearrange the resulting equation to get (\ref{thm3eq1}).
	\end{proof}
\subsection*{Proof of Theorem \ref{passive_composition}}
\begin{proof}
Output of $T_1$ is $h_1({x}_1,u_1)$ and we consider two possible outputs of $T_{12}=T_2 \times_{\mathcal{F}}^{(\epsilon_u,\epsilon_y)}T_1$. From the definition of approximate feedback composition and Proposition \ref{prop1},  possible relations between $T_{12}$ and $T_1$  are given by 
\begin{align*}
\textbf{Case 1:}& \quad h_{12}(x_1, x_2,u_1,u_2) = \frac{1}{2}(h_1(x_1,u_1)+ h_2(x_2,u_2)) &\Rightarrow T_{12}\preceq_{IOS}^{(\epsilon_u,\epsilon_y/2)} T_1 \\
\textbf{Case 2:} & \quad h_{12}(x_1, x_2,u_1,u_2) = h_2(x_2,u_2)  &\Rightarrow T_{12}\preceq_{IOS}^{(\epsilon_u,\epsilon_y)} T_1.
\end{align*}
Using \Cref{result1} for both these cases gives the result.
\end{proof}
\end{document}